\documentclass[11pt]{article}
\usepackage[a4paper, total={6in, 8in}]{geometry}
\usepackage{graphicx} 
\usepackage{amsmath}
\usepackage{amssymb}
\usepackage{amsthm}
\usepackage{todonotes}
\usepackage{ulem}
\usepackage{authblk}
\usepackage{url}

\newcommand{\idop}{\mathbb{I}}
\newcommand{\ket}[1]{|#1\rangle}
\newcommand{\bra}[1]{\langle#1|}
\newcommand{\ketbra}[1]{|#1\rangle\langle#1|}

\newcommand{\dop}[1]{\mathbf{S}\left(#1\right)}

\newcommand{\cdim}[1]{\mathbb{C}^{#1}}
\newcommand{\lpnorm}[2]{\left\|#2\right\|_{#1}}

\newcommand{\tr}[1]{\text{tr}\left[#1\right]}
\newcommand{\topeigval}[1]{\lambda_{{\rm max}}\left(#1\right)}
\newcommand{\expectation}[1]{\mathbb{E}\left[#1\right]}
\newcommand{\prob}[1]{\mathrm{Pr}\left[#1\right]}

\newcommand{\nn}{\mathbb{N}}
\newcommand{\rr}{\mathbb{R}}

\newcommand{\rvarX}{\mathbf{X}}
\newcommand{\rvarY}{\mathbf{Y}}

\newcommand{\Id}{I}
\DeclareMathOperator{\Range}{Range}

\newtheorem{lemma}{Lemma}

\newtheorem{theorem}{Theorem}
\newtheorem{proposition}{Proposition}
\newtheorem{corollary}{Corollary}

\title{Measurement Geometry for Quantum Random Access Codes: Beyond Nayak Bound and Toward Optimality}
\author[1,2]{Seiseki Akibue\thanks{corresponding author}}
\author[3]{Rudy Raymond}
\author[4]{Suguru Tamaki}
\author[5]{Kosei Teramoto}
\affil[1]{Communication Science Laboratories, NTT, Inc.}
\affil[2]{Sorbonne Université, CNRS, LIP6}
\affil[3]{WPI Bio2Q, Keio University}
\affil[4]{University of Hyogo}
\affil[5]{Mavericks, Inc.}

\begin{document}

\maketitle

\begin{abstract}
Quantum random access codes (QRACs) ask how well \(N\) classical bits can be encoded into \(M\) qubits while allowing any single bit to be recovered. 
Although the Nayak bound remains the standard general upper bound on the decoding probability, numerical evidence suggests a stronger upper bound in the small-qubit regime.

In this work, we formulate the optimal decoding probability in terms of decoding measurements, reformulating QRAC design as a spectral problem for noncommuting measurements. 
Using this formulation, we give an elementary proof of the Nayak bound by simplifying the Chernoff-bound argument. 
Moreover, we refine the argument to obtain upper bounds that improve over Nayak's bound in the entire finite-size regime.

The equality conditions of our bounds justify defining mutually unbiased projector-valued measurements (MUPVMs), a generalization of mutually unbiased bases. 
We show that decoding measurement of any two-qubit QRAC attaining the conjectured bound must form MUPVMs.
We also show that any MUPVM, assisted by one ancillary qubit, yields a QRAC with optimal \(N\)-scaling decoding probability. 
Finally, we propose a new MUPVM-based construction for the \((M+2,M)\)-QRAC family attaining the conjectured bound.
\end{abstract}

\section{Introduction}
Quantum random access codes (QRACs) are a basic primitive for studying how classical information can be compressed into quantum systems while retaining partial accessibility. 
In an \((N,M,p)\)-QRAC, Alice encodes an \(N\)-bit string into an \(M\)-qubit state, and Bob, given an index \(n\in[N]\), attempts to recover the \(n\)-th bit with success probability at least \(p\). 
This task captures a fundamental trade-off between compression rate and decoding probability, and has been studied in quantum information theory~\cite{Nayak99,WBook}, quantum finite automata~\cite{ANTV99,Nayak99}, contextuality~\cite{G02,SBTP09}, foundations of quantum mechanics~\cite{PPKSWZ09}, network coding~\cite{HINRY07}, cryptography~\cite{PB11}, dimension witnesses~\cite{WCD08}, certification of quantum measurements~\cite{TSVBB20,CHT20}, and combinatorial optimization~\cite{FHGIITJKBRM24,TRI23,Kondo2025recursivequantum}. 
Since certification, witnessing, and cryptographic protocols rely on thresholds for quantum advantage to determine their success, optimizing QRACs can enhance the feasibility of experimental implementations.

Despite their importance and long history, the optimal decoding probability of QRACs remains largely unresolved. 
The celebrated general upper bound is the Nayak bound, which follows from limitations on storing classical information in quantum systems~\cite{Nayak99}. 
For an \((N,M,p)\)-QRAC, it implies
\[
        M\geq (1-H_2(p))N
        \quad
        \Leftrightarrow
        \quad
        p\leq H_2^{-1}\left(1-\frac{M}{N}\right),
\]
where \(H_2(p)=-p\log_2 p-(1-p)\log_2(1-p)\) is the binary entropy function and the inverse is taken on \(p\in[1/2,1]\). 
Although even classical RACs can asymptotically attain the Nayak bound~\cite{ANTV99}, the bound is not believed to be tight for finite-dimensional QRACs. 
Numerical evidence and known constructions instead suggest the stronger finite-size behavior
\begin{equation}
\label{eq:num_conj}    
        p\leq \frac12\left(1+\sqrt{\frac{M}{N}}\right),
\end{equation}
particularly in the small-qubit regime. Figure~\ref{fig:nayak-vs-conjecture} compares this conjectured bound with the Nayak bound. 
This conjectured bound was proved for \(M=1\)~\cite{ambainis2009quantumrandomaccesscodes}, and Man\v{c}inska and Storgaard proved it for \(M\leq2\)~\cite{MS22}.

\begin{figure}
    \centering
    \includegraphics[width=0.8\linewidth]{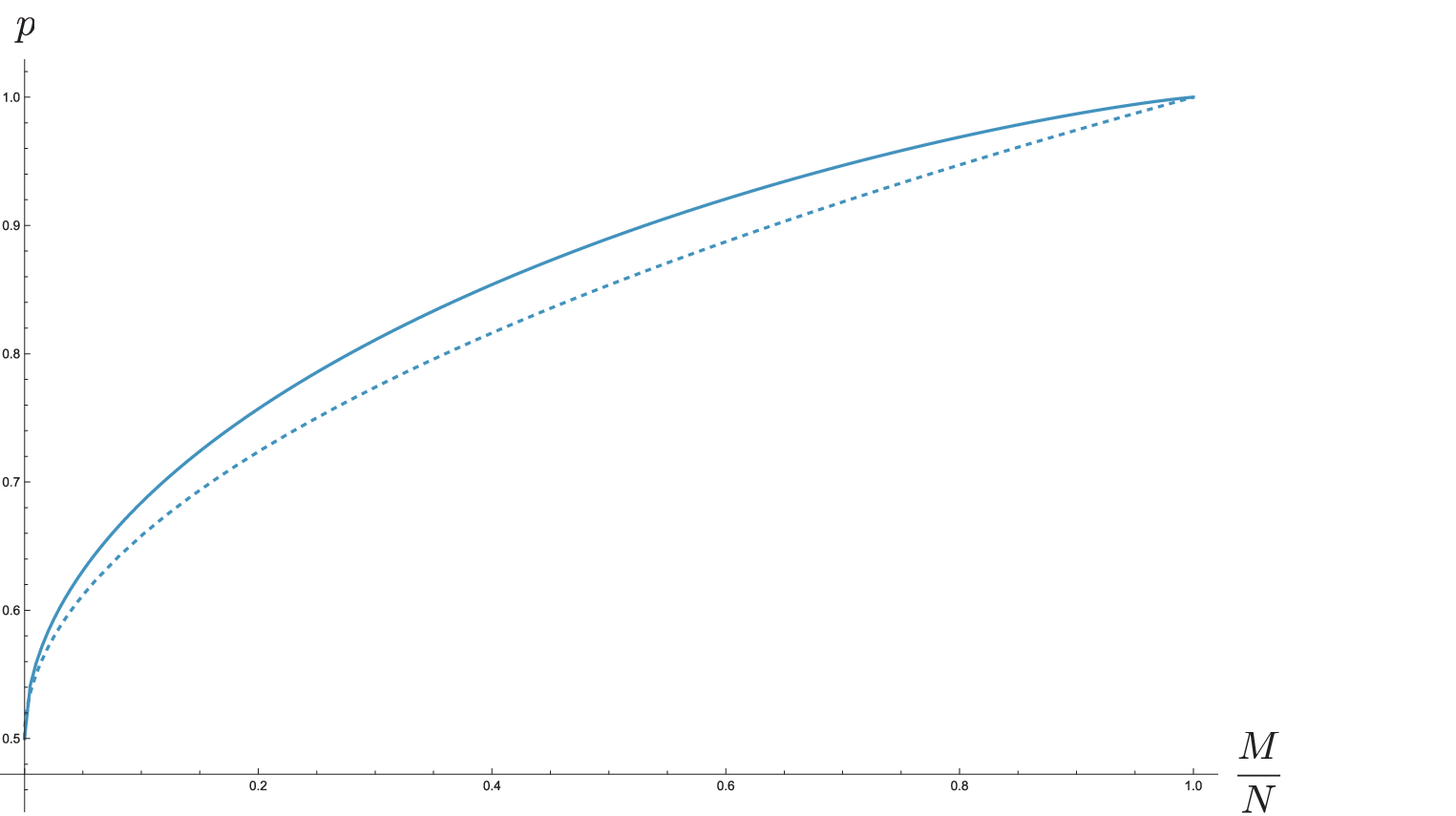}
    \caption{Comparison of upper bounds on the decoding probability $p$. The solid curve represents the Nayak bound $p \le H_{2}^{-1}\left(1-\frac{M}{N}\right)$. The dashed curve represents the conjectured bound $p\leq\frac{1}{2}\left(1+\sqrt{\frac{M}{N}}\right)$.}
    \label{fig:nayak-vs-conjecture}
\end{figure}

In the analysis of finite-size QRACs, the geometry of the Bloch sphere has played a central role~\cite{INRY07,ambainis2009quantumrandomaccesscodes,MS22,Farkas2025simplegeneralbounds,ABMP18}. 
This geometric viewpoint also provides useful intuition for designing encoding states. 
Although substantial effort has been devoted to finding sharper bounds for general QRACs using qudits~\cite{ambainis2009quantumrandomaccesscodes,THMB15,FK19,Vicente_2019,Farkas2025simplegeneralbounds}, the best known upper bounds for binary QRACs, which encode bits in qubits, remain the complementary bounds due to Nayak for \(M>2\) and Man\v{c}inska--Storgaard for \(M\leq2\). 
Moreover, we still have limited structural understanding of how optimal QRACs should be designed in higher-dimensional Hilbert spaces. 
This work addresses both issues by shifting the emphasis from encoding states to decoding measurements.

\subsection{Our Contributions}
The main conceptual novelty of this paper is a measurement-first approach to QRACs: we treat decoding measurements, rather than encoding states, as the primary objects to be designed and analyzed. 
This perspective leads to improvements over the Nayak bound for the entire finite-size regime and reveals geometric constraints that optimal decoders must satisfy.

\begin{enumerate}
    \item \textbf{Optimal decoding probability from decoding measurements.}
    We formulate binary \((N,M,p)\)-QRACs in terms of their decoding observables
    \[
        \{\sigma^{(n)}\}_{n=1}^N .
    \]
    For each classical string \(x\in\{-1,1\}^N\), or $x \in \{\pm\}^N$ for short, the corresponding operator is given by a signed sum
    \[
        Y_x=\sum_{n=1}^N {x_n}\sigma^{(n)} .
    \]
    We show that both the optimal average-case and worst-case decoding probabilities can be expressed through the largest eigenvalues $\lambda_{\max}(Y_x)$ of these signed sums. 
    Thus, QRAC design can be reformulated as a spectral problem for a family of generally noncommuting observables. Details are given in Section~\ref{sec:opt_prob_obs}.

    \item \textbf{Improved Nayak bound via a Chernoff-type argument.}
    The above reformulation naturally suggests applying matrix-concentration ideas to bound the decoding probability. 
    We first recover the Nayak bound by simplifying the matrix Chernoff argument. 
    Remarkably, this gives a simple proof of the Nayak bound that does not rely on strong subadditivity or operator monotonicity, which are used in the original information-theoretic and matrix-concentration proof.
    Moreover, we sharpen this argument by optimizing the moment-generating function used to bound \(\lambda_{\max}(Y_x)\). 
    This yields upper bounds on QRAC decoding probability that are strictly stronger than the Nayak bound in finite regimes. 
    In particular, our bounds reproduce the Man\v{c}inska--Storgaard bound.
    Details are given in Section~\ref{sec:finite_QRAC_bound}.

    \item \textbf{Geometric constraints on optimal decoders.}
    Since our bound matches the conjectured bound in Eq.~\eqref{eq:num_conj} for \(M\le2\), we derive equivalent conditions for the decoding observables to attain it. 
    These conditions imply, in particular, that an \((N,2)\)-QRAC can attain the conjectured bound if and only if $N\in\{2,3,4,6\}$. 
    They also show that optimal QRACs require strong spectral regularity of the signed sums \(Y_x\). 
    We formulate this property by introducing mutually unbiased projector-valued measurements (MUPVMs), a high-rank projective generalization of mutually unbiased bases.

    To demonstrate the significance of MUPVMs in higher-dimensional settings, we prove that any MUPVM can be used, with the assistance of a single-qubit ancilla, as a decoder of a QRAC with decoding probability
    \begin{equation}
    \label{eq:MUPVM_bound}
        p
        \ge
        \frac12\left(1+\frac1{\sqrt N}\right).
    \end{equation}
    Using results from matrix discrepancy, we argue this scaling is optimal in its \(N\)-dependence, especially in the regime \(N\gtrsim 2^M\).

    We also identify a distinguished subclass of MUPVMs: mutually unbiased measurements (MUMs), previously studied in the literature~\cite{FKN23,Kalev_2014}. 
    We show that QRACs based on MUMs attain the bound in Eq.~\eqref{eq:MUPVM_bound}. 
    However, MUMs exist only in the restricted regime \(N\le 2M+1\). 
    Consequently, although MUM-based QRACs are structurally clean, one must go beyond MUMs and consider more general MUPVMs in order to approach the conjectured bound (Eq.~\eqref{eq:num_conj}) or to encode a larger number of classical bits. 
    Details are given in Section~\ref{sec:Opt_decoder_geometry}.

    \item \textbf{QRAC constructions from MUPVMs.}
    Based on MUPVMs and additional heuristic structures, we propose a construction for the \((M+2,M)\) family that attains the conjectured bound (Eq.~\eqref{eq:num_conj}) and thereby adding to the Suzuki's \((M+1,M)\)-QRAC construction. We also improve the known $(5,2)$-QRAC.
    Details are given in Section~\ref{sec:QRAC_construction}.
\end{enumerate}

\section{Notations}
We denote $\nn=\{1,2,3,\cdots\}$ and 
$[N]=\{1,2,\cdots,N\}$.
We consider finite dimensional Hilbert spaces.
We identify every operator with its matrix representation with respect to the computational basis.
$\idop$ represents the the identity matrix.
A positive operator-valued measure (POVM) with a finite outcome set \(\Omega\) is a collection of positive semidefinite operators
$\{M_a\}_{a\in\Omega}$
acting on a Hilbert space \(\mathcal H\), satisfying the normalization condition
\[
    M_a\ge0
    \quad\text{for all }a\in\Omega,
    \qquad
    \sum_{a\in\Omega}M_a=\idop.
\]
$\dop{\mathcal{H}}:=\{\rho\in\mathrm{Hermitian\ operators\ acting\ on\ }\mathcal{H}:\rho\geq0,\tr{\rho}=1\}$ represents the set of density operators.

\section{Optimal decoding probability from decoding measurements}
\label{sec:opt_prob_obs}
Let $\{M_+^{(n)},M_-^{(n)}\}$ be the POVM used to decode the $n$-th bit. We define the corresponding $M$-qubit Hermitian matrix by $\sigma^{(n)}:=2M_+^{(n)}-\idop$. 
We call \(\sigma^{(n)}\) a decoding observable, since it coincides with the physical observable associated with the measurement \(\{M_+^{(n)},M_-^{(n)}\}\) when the measurement is projective. 
We use the same terminology even when the associated measurement is not projective.
Then $M_\pm^{(n)}=\frac{1}{2}(\idop\pm\sigma^{(n)})$ and $-\idop\leq\sigma^{(n)}\leq\idop$ hold for any POVM.

For fixed decoding POVMs, we optimize over all possible encoding states. 
The optimal average-case decoding probability is given by
\begin{eqnarray}
 p_{\rm avg}\left(\{\sigma^{(n)}\}_{n=1}^N\right)&:=&\frac{1}{N2^N}\max_{\{\rho_{x}\}_{x\in \{\pm\}^N}}\sum_{n,x}\tr{\rho_{x}M_{x_n}^{(n)}}\\
 &=&\frac{1}{N2^N}\sum_{x\in \{\pm\}^N}\max_{\rho\in\dop{\cdim{2^M}}}\sum_{n=1}^N\tr{\rho M_{x_n}^{(n)}}\\
&=&\frac{1}{2}\left(1+\frac{1}{N}\expectation{\max_{\rho\in\dop{\cdim{2^M}}}\sum_{n=1}^N\tr{\rho (x_n\sigma^{(n)})}}\right)\\
\label{eq:avgprob}
&=&\frac{1}{2}\left(1+\frac{1}{N}\expectation{\topeigval{\sum_{n=1}^N x_n\sigma^{(n)}}}\right),
 \end{eqnarray}
where the expectation value is taken over iid uniform random variables $x\in \{\pm\}^N$.
Here, the second equality follows because the encoding states \(\rho_x\) can be optimized independently for each \(x\). 
This calculation also shows that an optimal encoding state \(\rho_x\) may be chosen with support contained in the eigenspace of the largest eigenvalue of $Y_x=\sum_{n} x_n\sigma^{(n)}$.

We can similarly formulate the optimal worst-case decoding probability over all possible encodings:
\begin{eqnarray}
 p_{\rm wc}\left(\{\sigma^{(n)}\}_{n=1}^N\right)&:=&\max_{\{\rho_{x}\}_{x\in \{\pm\}^N}}\min_{n,x}\tr{\rho_{x}M_{x_n}^{(n)}}\\
 &=&\min_{x\in \{\pm\}^N}\max_{\rho\in\dop{\cdim{2^M}}}\min_n\tr{\rho M_{x_n}^{(n)}}\\
 \label{eq:worst_prob_cal}
 &=&\min_{x\in \{\pm\}^N}\max_{\rho\in\dop{\cdim{2^M}}}\min_p\sum_{n=1}^Np(n)\tr{\rho M_{x_n}^{(n)}}.
 \end{eqnarray}
 Here \(p\) ranges over probability distributions on \([N]\). 
Since $f(p,\rho)=\sum_{n=1}^Np(n)\tr{\rho M_{x_n}^{(n)}}$ is bilinear in \(p\) and \(\rho\), and since the feasible regions for both variables are compact and convex, von Neumann's minimax theorem gives
 \begin{eqnarray}
Eq.~\eqref{eq:worst_prob_cal} &=&\min_{x\in \{\pm\}^N}\min_p\max_{\rho\in\dop{\cdim{2^M}}}\sum_{n=1}^Np(n)\tr{\rho M_{x_n}^{(n)}}\\
 &=&\frac{1}{2}\left(1+\min_{x\in \{\pm\}^N}\min_p\max_{\rho\in\dop{\cdim{2^M}}}\sum_{n=1}^Np(n)\tr{\rho (x_n\sigma^{(n)})}\right)\\
 \label{eq:worstprob_exact}
&=&\frac{1}{2}\left(1+\min_{x\in \{\pm\}^N}\min_p\topeigval{\sum_{n=1}^Nx_np(n)\sigma^{(n)}}\right)\\
\label{eq:worstprob}
&\leq&\frac{1}{2}\left(1+\frac{1}{N}\min_{x\in \{\pm\}^N}\topeigval{\sum_{n=1}^Nx_n\sigma^{(n)}}\right),
 \end{eqnarray}
where the last inequality follows by choosing \(p\) to be the uniform distribution on \([N]\).

Note that the maximizing state \(\rho\) for fixed \(x\) and \(p\) may be chosen to be pure. 
However, this does not imply that an optimal worst-case encoding can always be chosen pure, because the minimax theorem usually changes the optimal strategy. That is, the optimal encoding may require mixed states.

\section{Improved Nayak bound via a Chernoff-type argument}
\label{sec:finite_QRAC_bound}

We first prove the following proposition, which can be used to prove upper bounds on the worst-case decoding probability $p_{\rm wc}$.
\begin{proposition}
    Let $\rvarY$ be a random Hermitian matrix. Then, for every $t\in\rr$ and every monotonically non-decreasing function $f:\rr\rightarrow[0,\infty)$, we have
    \begin{equation}
        f(t)\prob{\lambda_{\max}(\rvarY)\geq t}\leq \expectation{\tr{f(\rvarY)}}.
    \end{equation}
\end{proposition}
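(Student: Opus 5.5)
The plan is a matrix Markov inequality argument carried out pointwise. Fix $t\in\rr$ and a monotonically non-decreasing $f:\rr\to[0,\infty)$, and write $\mathbf{1}[\cdot]$ for the indicator function. I will first prove the deterministic inequality
\begin{equation*}
f(t)\,\mathbf{1}\left[\topeigval{Y}\geq t\right]\leq \tr{f(Y)}
\end{equation*}
for every Hermitian matrix $Y$. Taking expectations over $\rvarY$ then gives the claim: the left-hand side has expectation $f(t)\prob{\topeigval{\rvarY}\geq t}$, and linearity of expectation does the rest.

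To prove the deterministic inequality, I diagonalize $Y=\sum_i \lambda_i\ketbra{v_i}$. By the functional calculus, $f(Y)=\sum_i f(\lambda_i)\ketbra{v_i}$, so $\tr{f(Y)}=\sum_i f(\lambda_i)$. Since $f\geq0$, every term in this sum is nonnegative, and hence
\begin{equation*}
\tr{f(Y)}\geq f(\topeigval{Y}).
\end{equation*}
Now I split into two cases. If $\topeigval{Y}\geq t$, monotonicity of $f$ gives $f(\topeigval{Y})\geq f(t)$, which is exactly the left-hand side. If $\topeigval{Y}<t$, the left-hand side equals $0$, and the inequality holds because $\tr{f(Y)}\geq0$.

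There is no real obstacle here. The only points to check are that nonnegativity of $f$ is what allows all eigenvalues other than the top one to be dropped, and that $f(t)\geq0$ keeps the indicator argument valid. For measurability I will assume $\rvarY$ takes finitely many values or is otherwise well-behaved, as in the intended application $\rvarY=\sum_n x_n\sigma^{(n)}$ with uniform $x\in\{\pm\}^N$. Then $\topeigval{\rvarY}$ and $\tr{f(\rvarY)}$ are random variables and the expectation is a finite average.
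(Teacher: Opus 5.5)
Your proof is correct and is essentially the paper's argument. The paper combines the event containment $\{\lambda_{\max}(\rvarY)\ge t\}\subseteq\{f(\lambda_{\max}(\rvarY))\ge f(t)\}$, Markov's inequality, and the bound $f(\lambda_{\max}(H))\le\tr{f(H)}$; you use the same ingredients, with the proof of Markov's inequality written out as your pointwise indicator bound (and your measurability caveat is harmless but not needed, since $\lambda_{\max}$ is continuous and a monotone $f$ is Borel).
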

\begin{proof}
    Since \(f\) is non-decreasing, the event \(\{\lambda_{\max}(\rvarY)\ge t\}\) is contained in
    \(\{f(\lambda_{\max}(\rvarY))\ge f(t)\}\). 
    Hence, by Markov's inequality,    \begin{eqnarray}
        \label{eq:Markov2}
        f(t)\prob{\lambda_{\max}(\rvarY)\geq t}&\leq&f(t)\prob{f(\lambda_{\max}(\rvarY))\geq f( t)}\\
        &\leq& \expectation{f(\lambda_{\max}(\rvarY))}.
    \end{eqnarray}
    
    For any Hermitian matrix \(H\), the quantity \(f(\lambda_{\max}(H))\) is one of the eigenvalues of \(f(H)\). 
    Since \(f\) is positive, all eigenvalues of \(f(H)\) are positive, and therefore
    \[
        f(\lambda_{\max}(H))\le \tr{f(H)}.
    \]
    Applying this inequality to \(H=\rvarY\) completes the proof.
\end{proof}

When \(f(x)=e^{\theta x}\) with a positive parameter \(\theta>0\), this proposition reduces to the standard Laplace-transform method, which is a central tool in the derivation of matrix Chernoff bounds~\cite{Tropp12}.
This proposition shows that if $ \expectation{\tr{f(\rvarY)}}<f(t)$, then there exists an event where $\lambda_{\max}(\rvarY)< t.$

We can strengthen this observation when \(f\) is additionally convex. This proposition will be used in the derivation of upper bounds on the average-case decoding probability $p_{\rm avg}$.

\begin{proposition}
    \label{prop:Exp_Markov}
    Let $R\in(0,\infty]$. 
    Let $\rvarY$ be an random Hermitian matrix whose eigenvalues lie in $(-R,R)$. Let $f:(-R,R)\rightarrow[0,\infty)$ be convex and monotonically non-decreasing.
    Then, for every $t\in(-R,R)$,
    \begin{equation}
         \expectation{\tr{f(\rvarY)}}<f(t)\quad \Rightarrow\quad \expectation{\lambda_{\max}(\rvarY)}< t.
    \end{equation}
\end{proposition}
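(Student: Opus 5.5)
We prove the contrapositive in the form of a chain of inequalities. Suppose $\expectation{\tr{f(\rvarY)}}<f(t)$. The idea is to combine three ingredients: the spectral comparison $f(\lambda_{\max}(H))\le \tr{f(H)}$ already used in the proof of the previous proposition, Jensen's inequality for the convex function $f$, and then monotonicity of $f$ to pass from $f(\expectation{\lambda_{\max}(\rvarY)})<f(t)$ back to $\expectation{\lambda_{\max}(\rvarY)}<t$.

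Concretely, the steps are as follows. First, since all eigenvalues of $\rvarY$ lie in $(-R,R)$, the matrix $f(\rvarY)$ is well defined and has nonnegative eigenvalues. One of these eigenvalues is $f(\lambda_{\max}(\rvarY))$, so pointwise $f(\lambda_{\max}(\rvarY))\le\tr{f(\rvarY)}$. Taking expectations gives $\expectation{f(\lambda_{\max}(\rvarY))}\le\expectation{\tr{f(\rvarY)}}$.

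Second, the random variable $\lambda_{\max}(\rvarY)$ takes values in the interval $(-R,R)$, which is convex. Its expectation is finite: this is automatic for $R<\infty$, and for $R=\infty$ we note that in the setting of interest $\rvarY$ takes finitely many values. If needed, we simply assume integrability, since otherwise the conclusion is meaningless. Hence $\expectation{\lambda_{\max}(\rvarY)}\in(-R,R)$, and Jensen's inequality for convex $f$ yields $f(\expectation{\lambda_{\max}(\rvarY)})\le \expectation{f(\lambda_{\max}(\rvarY))}$. Combining the two steps gives
\begin{equation*}
f\left(\expectation{\lambda_{\max}(\rvarY)}\right)\le\expectation{\tr{f(\rvarY)}}<f(t).
\end{equation*}

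Third, we conclude by monotonicity. Suppose for contradiction that $\expectation{\lambda_{\max}(\rvarY)}\ge t$. Since $f$ is non-decreasing, this gives $f(\expectation{\lambda_{\max}(\rvarY)})\ge f(t)$, which contradicts the strict inequality above. Therefore $\expectation{\lambda_{\max}(\rvarY)}<t$. Note that only non-decreasing monotonicity is needed here, because the inequality obtained in the previous step is strict.

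The only delicate point is the application of Jensen's inequality: we must ensure that the expectation stays inside the open domain $(-R,R)$ and that the expectations are finite. This follows from convexity of the interval together with finiteness or boundedness of the random variable. Everything else follows directly from the argument of the previous proposition.
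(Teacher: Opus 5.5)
Your proposal is correct and matches the paper's proof step for step: the pointwise bound $f(\lambda_{\max}(H))\le\tr{f(H)}$, then Jensen's inequality for the convex $f$, then monotonicity of $f$ together with the strict inequality to conclude $\expectation{\lambda_{\max}(\rvarY)}<t$. You also check that the expectation is finite and lies in the open interval $(-R,R)$; the paper leaves these points implicit.
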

\begin{proof}
    By the same eigenvalue argument as above, for every Hermitian matrix \(H\), $f(\lambda_{\max}(H))\le \tr{f(H)}$.
    Hence,
    \[
        \expectation{f(\lambda_{\max}(\rvarY))}\leq
         \expectation{\tr{f(\rvarY)}}< f(t).
    \]
    Since \(f\) is convex, Jensen's inequality gives
    \[
        f\!\left(\expectation{\lambda_{\max}(\rvarY)}\right)
        \le
        \expectation{f(\lambda_{\max}(\rvarY))}
        < f(t).
    \]
    Since \(f\) is non-decreasing, this inequality
    implies $\expectation{\lambda_{\max}(\rvarY)}< t$.
\end{proof}

In the following proposition, we assume that $f$ is analytic, i.e., $f$ admits a power-series expansion $f(x)=\sum_{k=0}^\infty a_kx^k$ with radius of convergence $R\in(0,\infty]$, so that the series converges for $|x|<R$ and the domain of $f$ is $(-R,R)$. Note that $f(H)$ is well defined and its Taylor expansion converges as $f(H)=\sum_{k=0}^\infty \frac{f^{(k)}(0)}{k!}H^k$ if the eigenvalues of a Hermitian matrix $H$ lie in $(-R,R)$.

\begin{proposition}
    \label{prop:Moment_bound}
    Let $f:(-R,R)\rightarrow\rr$ be an analytic function such that $f^{(2k)}(0)\geq0$ for all $k\in\nn$, and assume that $R>N$.
    Let $\rvarY=\sum_{n=1}^N{x_n}\sigma^{(n)}$ where $x_n\in\{-1,1\}$ are iid uniform random variables and $\sigma^{(n)}$ are Hermitian matrices satisfying $\lpnorm{\infty}{\sigma^{(n)}}\leq1$.
    Then,
    \begin{equation}
        \expectation{\tr{f(\rvarY)}}\leq 2^M\expectation{f\left(\sum_{n=1}^N{x_n}\right)}=\frac{2^M}{2^N}\sum_{k=0}^N
        \begin{pmatrix}
            N\\k
        \end{pmatrix}
        f(N-2k).
    \end{equation}
\end{proposition}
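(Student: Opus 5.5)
Expand $f$ in its power series: $f(x)=\sum_{k\ge0}a_kx^k$ with $a_k=f^{(k)}(0)/k!$. Since $\lpnorm{\infty}{\rvarY}\le\sum_n\lpnorm{\infty}{\sigma^{(n)}}\le N<R$, the series $f(\rvarY)=\sum_k a_k\rvarY^k$ converges for every realization of $x$. There are finitely many realizations and the trace is linear, so we may exchange expectation, trace and sum:
\[
\expectation{\tr{f(\rvarY)}}=\sum_{k\ge0}a_k\,\expectation{\tr{\rvarY^k}}.
\]
The same expansion gives $\expectation{f(\sum_n x_n)}=\sum_k a_k\,\expectation{(\sum_n x_n)^k}$. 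The odd moments vanish on both sides: the distribution of $x$ is invariant under $x\mapsto -x$, which sends $\rvarY\mapsto-\rvarY$, so $\expectation{\tr{\rvarY^{2j+1}}}=0$, and likewise for the scalar sum. The odd coefficients, whose signs are uncontrolled, therefore drop out. The even coefficients satisfy $a_{2k}\ge0$ by hypothesis, and $a_0=f(0)$ appears on both sides with weight $\tr{\idop}=2^M$. It therefore suffices to prove the termwise moment comparison
\[
\expectation{\tr{\rvarY^{2k}}}\le 2^M\,\expectation{\Bigl(\sum_{n=1}^N x_n\Bigr)^{2k}}\qquad\text{for all }k\in\nn.
\]
The final equality in the statement is then just the binomial distribution of $\sum_n x_n=N-2k$, where $k$ is the number of $-1$'s.

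For the moment comparison, expand
\[
\rvarY^{2k}=\sum_{n_1,\dots,n_{2k}}x_{n_1}\cdots x_{n_{2k}}\,\sigma^{(n_1)}\cdots\sigma^{(n_{2k})}.
\]
Taking the expectation over the Rademacher variables, $\expectation{x_{n_1}\cdots x_{n_{2k}}}$ is $1$ if every index occurs an even number of times and $0$ otherwise. Hence
\[
\expectation{\tr{\rvarY^{2k}}}=\sum_{(n_1,\dots,n_{2k})\in E}\tr{\sigma^{(n_1)}\cdots\sigma^{(n_{2k})}},
\]
where $E$ is the set of index words in which each index appears an even number of times. The scalar moment $\expectation{(\sum_n x_n)^{2k}}$ equals exactly $|E|$.

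Each term satisfies $|\tr{\sigma^{(n_1)}\cdots\sigma^{(n_{2k})}}|\le 2^M\lpnorm{\infty}{\sigma^{(n_1)}\cdots\sigma^{(n_{2k})}}\le 2^M$, by submultiplicativity of the operator norm and $\lpnorm{\infty}{\sigma^{(n)}}\le1$. Summing over $E$ gives $\expectation{\tr{\rvarY^{2k}}}\le 2^M|E|$. Note that $\expectation{\tr{\rvarY^{2k}}}$ is real and nonnegative, since $\rvarY^{2k}\ge0$.

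The main care point is the bookkeeping, not any analytic difficulty. One must check that no noncommutativity issue arises: termwise absolute bounds do not need any commutation or trace inequality. One must also confirm that the sign hypothesis is needed only on the even coefficients, which is exactly why the odd moments have to be shown to vanish first. Convergence is automatic because the sum over $x$ is finite and $N<R$.
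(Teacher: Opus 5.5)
Your proposal is correct and follows essentially the same route as the paper: Taylor-expand $f$ and interchange with the finite expectation and the trace, use Rademacher averaging to keep only words in which every index appears an even number of times, bound each such trace by $2^M$, and identify the number of these words with the scalar moment $\expectation{(\sum_n x_n)^{2k}}$. The only cosmetic difference is that you dispose of the odd moments via the symmetry $x\mapsto -x$, whereas the paper notes that $\mathrm{EVEN}(2k-1)=\emptyset$.
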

\begin{proof}
    Since the equality can be verified via a straightforward calculation, we show the inequality in the following.
    Since $\lpnorm{\infty}{\rvarY}\leq N$ for every event, the Taylor expansion of $f(\rvarY)$ converges. By using this expansion, we obtain
    \begin{equation}
        \expectation{\tr{f(\rvarY)}}=\tr{\expectation{f(\rvarY)}}=\tr{\expectation{\sum_{k=0}^\infty\frac{f^{(k)}(0)}{k!}\rvarY^k}}.
    \end{equation}
    Since $\sum_{k=0}^\infty\frac{f^{(k)}(0)}{k!}\rvarY^k$ converges for every event, we can interchange the summation with both the expectation and the trace, and proceed as follows:    
    \begin{eqnarray}
        &&\tr{\expectation{\sum_{k=0}^\infty\frac{f^{(k)}(0)}{k!}\rvarY^k}}\\
        &=&\sum_{k=0}^\infty\frac{f^{(k)}(0)}{k!}\tr{\expectation{\rvarY^k}}\\
        \label{eq:moment_cal1}        &=&f(0)\tr{\idop}+\sum_{k=1}^\infty\frac{f^{(k)}(0)}{k!}\tr{\expectation{\sum_{\vec{n}\in[N]^k}\prod_{i=1}^k{x_{n_i}}\sigma^{(n_i)}}}.
    \end{eqnarray}

    Observe that for every $\vec{n}\in[N]^k$,
    \begin{equation}
        \expectation{\prod_{i=1}^k{x_{n_i}}\sigma^{(n_i)}}
        =\prod_{n:|\{i\in[k]:n_{i}=n\}|{\rm\ is\ odd}}\expectation{{x_n}}\prod_{i=1}^k\sigma^{(n_i)}.
    \end{equation}
    By defining
    \begin{equation}
        \mathrm{EVEN}(k):=\{(n_1,\cdots,n_{k})\in[N]^{k}:\forall n\in[N],|\{i\in[k]:n_{i}=n\}|{\rm\ is\ even}\},
    \end{equation}
    we find that
    \begin{equation}
        \expectation{\prod_{i=1}^k{x_{n_i}}\sigma^{(n_i)}}
        =\left\{
        \begin{array}{ll}
             0&{\rm if\ }\vec{n}\notin\mathrm{EVEN}(k)\\
            \prod_{i=1}^k\sigma^{(n_i)}&{\rm if\ }\vec{n}\in\mathrm{EVEN}(k).  
        \end{array}\right.
    \end{equation}
    Since $\mathrm{EVEN}(2k-1)=\emptyset$ for any $k\in\nn$, we proceed as follows

    \begin{eqnarray}
        \label{eq:momment_cal2}
        Eq.~\eqref{eq:moment_cal1}&=&f(0)\tr{\idop}+\sum_{k=1}^\infty\frac{f^{(2k)}(0)}{(2k)!}\tr{\sum_{\vec{n}\in\mathrm{EVEN}(2k)}\prod_{i=1}^{2k}\sigma^{(n_i)}}.
    \end{eqnarray}

    Since $f^{(2k)}(0)\geq0$ and $|\tr{\sigma^{(n_1)}\cdots\sigma^{(n_k)}}|\leq\lpnorm{1}{\sigma^{(n_k)}}\leq2^M$, we obtain
    \begin{eqnarray}
        Eq.~\eqref{eq:momment_cal2}
        &\leq&2^M\left(f(0)+\sum_{k=1}^\infty\frac{f^{(2k)}(0)}{(2k)!}\left|\mathrm{EVEN}(2k)\right|\right).
    \end{eqnarray}    
    On the other hand, when $\sigma^{(n)}=1$ for all $n$ (i.e., in the $1\times1$ matrix case), Eq.~\eqref{eq:momment_cal2} implies
    \begin{equation}
        \expectation{f\left(\sum_{n=1}^N{x_n}\right)}=f(0)+\sum_{k=1}^\infty\frac{f^{(2k)}(0)}{(2k)!}\left|\mathrm{EVEN}(2k)\right|.
    \end{equation}
    This completes the proof.
\end{proof}

When $f(x)=e^{\theta x}$ with a positive parameter $\theta>0$, which is convex, monotonically non-decreasing and analytic, we can derive the Nayak bound as follows:
\begin{theorem}[Simpler derivation of Nayak bound]
    Let $N\geq M$. Any $(N,M)$-QRAC satisfies
    \begin{equation}
            p_{\rm avg} \le H_{2}^{-1}\left(1-\frac{M}{N}\right),
    \end{equation}
    where \(H_2(p)=-p\log_2 p-(1-p)\log_2(1-p)\) is the binary entropy function and the inverse is taken on \(p\in[1/2,1]\). 
\end{theorem}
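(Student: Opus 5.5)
We apply Proposition~\ref{prop:Exp_Markov} together with Proposition~\ref{prop:Moment_bound} to $f(x)=e^{\theta x}$ with $\theta>0$. This $f$ is convex, non-decreasing, nonnegative and entire ($R=\infty>N$), and all its even derivatives at $0$ are $\theta^{2k}\ge0$. Let $\rvarY=\sum_n x_n\sigma^{(n)}$ with $\sigma^{(n)}$ the decoding observables of an arbitrary $(N,M)$-QRAC, which satisfy $\lpnorm{\infty}{\sigma^{(n)}}\le1$. Proposition~\ref{prop:Moment_bound} then gives
\[
\expectation{\tr{e^{\theta\rvarY}}}\le 2^M\,\expectation{e^{\theta\sum_n x_n}}=2^M\cosh^N\theta .
\]
Take any $t$ with $2^M\cosh^N\theta<e^{\theta t}$. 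Proposition~\ref{prop:Exp_Markov} yields $\expectation{\lambda_{\max}(\rvarY)}<t$. By Eq.~\eqref{eq:avgprob}, this means $p_{\rm avg}<\frac12(1+t/N)$. Since $p_{\rm avg}$ is bounded by the optimum over encodings, it is also bounded for the given code.

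It remains to optimize over $\theta$. Write $t=N(2q-1)$ with $q\in(1/2,1)$, and divide the condition by $N$ after taking logarithms. The condition becomes
\[
\frac{M}{N}\ln 2<\theta(2q-1)-\ln\cosh\theta .
\]
The right-hand side is maximized at $\tanh\theta=2q-1$, that is, $e^{2\theta}=q/(1-q)$. There a direct computation (the standard Cramér/Chernoff rate for a fair coin) gives
\[
\theta(2q-1)-\ln\cosh\theta=\ln 2-H(q)\ln2=(1-H_2(q))\ln 2 .
\]
Indeed, $\cosh\theta=1/\sqrt{1-(2q-1)^2}=1/(2\sqrt{q(1-q)})$ and $\theta=\tfrac12\ln\frac{q}{1-q}$. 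So the condition reads $M/N<1-H_2(q)$.

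Hence every $q\in(1/2,1)$ with $H_2(q)<1-M/N$ satisfies $p_{\rm avg}<q$. Let $q\downarrow H_2^{-1}(1-M/N)$, using continuity and monotonic decrease of $H_2$ on $[1/2,1]$. This gives $p_{\rm avg}\le H_2^{-1}(1-M/N)$.

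Edge cases need separate handling. When $M=N$, the right-hand side is $H_2^{-1}(0)=1$, which is trivial. When $M/N$ is such that $1-M/N$ lies in $(0,1)$, the argument above applies. The inverse is well defined because $N\ge M$.

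The main obstacle is the optimization over $\theta$, specifically verifying the identity $\sup_\theta[\theta(2q-1)-\ln\cosh\theta]=(1-H_2(q))\ln2$. The first attempt will be the explicit substitution above. The remaining care is in the strict-versus-nonstrict limiting step.
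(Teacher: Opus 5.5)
Your proof is correct and follows essentially the same route as the paper: you plug $f(x)=e^{\theta x}$ into Propositions~\ref{prop:Moment_bound} and~\ref{prop:Exp_Markov} to get $\expectation{\topeigval{\rvarY}}<t$ whenever $2^M\cosh^N\theta<e^{\theta t}$, then optimize at $\tanh\theta=t/N$ (the paper's $\theta=\frac12\ln\frac{N+t}{N-t}$) to obtain the binary-entropy rate. The only difference is cosmetic: you parametrize $t=N(2q-1)$ and let $q\downarrow H_2^{-1}(1-M/N)$, which handles the strict-versus-nonstrict step a bit more transparently than the paper's direct substitution $t=N(2p_{\rm avg}-1)$ into the infimum.
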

\begin{proof}
    Since the statement is trivial if $N=M$, we consider the case where $N>M$.
    By setting $f(x)=e^{\theta x}$ with a positive parameter $\theta>0$, Proposition \ref{prop:Moment_bound} implies
    \begin{equation}
        \expectation{\tr{\exp(\theta\rvarY)}}\leq\frac{2^M}{2^N}\sum_{k=0}^N\begin{pmatrix}
            N\\k
        \end{pmatrix}
        \exp(\theta(N-2k))=2^M(\cosh\theta)^N,
    \end{equation}
    where $\rvarY=\sum_{n=1}^N{x_n}\sigma^{(n)}$, $\sigma^{(n)}$ are decoding observables of $(N,M)$-QRAC and $x_n\in\{\pm1\}$ are iid uniform random variables.
    Combining with Proposition \ref{prop:Exp_Markov}, we obtain 
    \begin{eqnarray}
        &&\expectation{\lambda_{\max}(\rvarY)}\nonumber\\
        &\leq&\inf\left\{t\in\rr:\exists\theta>0,e^{-\theta t}\expectation{\tr{\exp(\theta\rvarY)}}<1\right\}\\
        &=&\inf\left\{t\in\rr:\inf_{\theta>0}e^{-\theta t}\expectation{\tr{\exp(\theta\rvarY)}}<1\right\}\\
        &\leq&\inf\left\{t\in\rr:\inf_{\theta>0}e^{-\theta t}2^{M}(\cosh\theta)^N<1\right\}\\
        &=&\inf\left\{t\in\rr:\inf_{\theta>0}\log_2\left(e^{-\theta t}2^{M}(\cosh\theta)^N\right)<0\right\}\\
        &=&\inf\left\{t\in\rr:M-N+\inf_{\theta>0}N\log_2(e^\theta+e^{-\theta})-(\log_2e)\theta t<0\right\}\\
        \label{eq:Nayak_cal}
        &\leq&\inf\left\{t\in(-N,N):M-N-\frac{1}{2}t\log_2\frac{N+t}{N-t}+N\log_2\frac{2N}{\sqrt{N^2-t^2}}<0\right\},
    \end{eqnarray}
    where we used the fact that $\theta=\frac{1}{2}\ln\frac{N+t}{N-t}$ minimizes the infimum when $|t|<N$ to derive the last inequality.
    By observing Eq.~\eqref{eq:avgprob} implies that $\expectation{\lambda_{\max}(\rvarY)}=N(2p_{\rm avg}-1)\in[0,N)$ if $N>M$, we obtain
    \begin{eqnarray}
        &&\frac{M}{N}-1-\left(p_{\rm avg}-\frac{1}{2}\right)\log_2\frac{p_{\rm avg}}{1-p_{\rm avg}}+\frac{1}{2}\log_2\frac{1}{p_{\rm avg}(1-p_{\rm avg})}\geq0\\
        &\Leftrightarrow&H_2(p_{\rm avg})\geq1-\frac{M}{N}.
    \end{eqnarray}
    This completes the proof.
\end{proof}

While the Nayak bound is known to be tight in the asymptotic regime~\cite{ANTV99,Nayak99}, we can derive sharper bounds in the finite-size regime by allowing a broader class of functions $f$ as follows.
\begin{theorem}[Finite-Size Improvement over the Nayak Bound]
\label{thm:finite_QRAC_bound}
    Let $N\geq M$. Any $(N,M)$-QRAC satisfies
    \begin{equation}
            p_{\rm avg} \le \frac{1}{2}\left(1+\frac{1}{N}\inf_{l\in\nn}\left(\frac{2^M}{2^{N+1}}\sum_{k=0}^N
        \begin{pmatrix}
            N\\k
        \end{pmatrix}
        (N-2k)^{2l}\right)^{\frac{1}{2l}}\right).
    \end{equation}
    Moreover, this bound is sharper than the Nayak bound for any finite $N$ and $M$.
\end{theorem}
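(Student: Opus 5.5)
The plan is to apply Proposition~\ref{prop:Moment_bound} with the even power $f(x)=x^{2l}$ instead of the exponential. Since $f$ is not monotone, Proposition~\ref{prop:Exp_Markov} cannot be used directly. I will exploit the sign symmetry $x\mapsto -x$ instead, which should be the source of the extra factor $\frac12$ inside the bound.

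\textbf{Step 1 (symmetrization).} Let $\rvarY=\sum_n x_n\sigma^{(n)}$. Because $x$ and $-x$ are identically distributed and $\lambda_{\max}(-\rvarY)=-\lambda_{\min}(\rvarY)$, we get
\[
\expectation{\lambda_{\max}(\rvarY)}=\expectation{\tfrac12\bigl(\lambda_{\max}(\rvarY)-\lambda_{\min}(\rvarY)\bigr)},
\]
and the integrand is nonnegative.

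\textbf{Step 2 (pointwise bound).} Write $a=\lambda_{\max}$ and $b=\lambda_{\min}$. Convexity of $x^{2l}$ gives $\bigl(\frac{a-b}{2}\bigr)^{2l}\le \frac{a^{2l}+b^{2l}}{2}$. If $a\neq b$, then $a$ and $b$ are two distinct eigenvalues, so $a^{2l}+b^{2l}\le \tr{\rvarY^{2l}}$. If $a=b$, the left-hand side is $0$. Hence $\bigl(\frac{a-b}{2}\bigr)^{2l}\le\frac12\tr{\rvarY^{2l}}$ in every event; this uses $2^M\ge 2$.

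\textbf{Step 3 (combine).} Jensen's inequality for $x^{2l}$, followed by Proposition~\ref{prop:Moment_bound} with $f(x)=x^{2l}$ (entire, with nonnegative even derivatives at $0$), gives
\[
\expectation{\lambda_{\max}(\rvarY)}\le\Bigl(\tfrac12\expectation{\tr{\rvarY^{2l}}}\Bigr)^{\frac1{2l}}\le\Bigl(\tfrac{2^M}{2^{N+1}}\textstyle\sum_k\binom Nk(N-2k)^{2l}\Bigr)^{\frac1{2l}}.
\]
Inserting this into Eq.~\eqref{eq:avgprob} and taking the infimum over $l$ yields the stated bound.

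\textbf{Step 4 (strictly sharper than Nayak).} This is the main obstacle; I would handle it with a Philips--Nelson type argument (moment bounds beat Chernoff bounds). Let $S=\sum_n x_n$ and take $N>M$. Let $t^*$ be the Nayak value of $\expectation{\lambda_{\max}}$, i.e. the threshold in Eq.~\eqref{eq:Nayak_cal}, so that $2^M\inf_{\theta>0}e^{-\theta t^*}(\cosh\theta)^N=1$. Let $\theta^*>0$ be the attaining parameter.

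Suppose, for contradiction, that $2^{M-1}\expectation{S^{2l}}\ge (t^*)^{2l}$ for every $l$, including the trivial case $l=0$. Multiply by $(\theta^*)^{2l}/(2l)!$ and sum the series; odd moments of $S$ vanish, so $\expectation{\cosh\theta S}=(\cosh\theta)^N$. This gives
\[
2^{M-1}(\cosh\theta^*)^N\ge\cosh(\theta^*t^*),
\]
that is,
\[
2^M(\cosh\theta^*)^N\ge e^{\theta^*t^*}+e^{-\theta^*t^*}>e^{\theta^*t^*},
\]
which contradicts the defining equality at $\theta^*$. Hence some $l$ gives $\bigl(2^{M-1}\expectation{S^{2l}}\bigr)^{1/2l}<t^*$, and the new bound is strictly below the Nayak bound, since $p\mapsto$ (threshold) is strictly monotone.

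One detail still needs care: confirming that $t^*\in(0,N)$ and that $\theta^*$ is an actual minimizer when $N>M$. This should follow from the computation preceding Eq.~\eqref{eq:Nayak_cal}. The degenerate case $N=M$, where Nayak's bound is trivial, should be checked separately.
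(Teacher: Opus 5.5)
Your argument is correct and arrives at exactly the paper's bound, but by a more direct route.

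The paper works at the level of test functions. It optimizes the criterion of Proposition~\ref{prop:Exp_Markov} over $f$ whose even part is a limit of even analytic functions with nonnegative Taylor coefficients. It then argues, through the rescaled $\hat f$, a Stone--Weierstrass approximation of the odd part, and a mediant inequality, that one may take $f(x)=2\max\{x,0\}^{2l}$. There the factor $\frac12$ arises because the odd part of $f$ drops out of the moment bound while $f(t)=2f_{\rm even}(t)$.

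You instead symmetrize $\lambda_{\max}$ itself and use the pointwise bound $\bigl(\frac{a-b}{2}\bigr)^{2l}\le\frac12\mathrm{tr}\,\rvarY^{2l}$. As a result, Proposition~\ref{prop:Moment_bound} is applied only to the polynomial $x^{2l}$, and no non-analytic test function or limiting argument is needed.

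The paper's framing buys an explanation of why even monomials are optimal within a class that contains $e^{\theta x}$. However, its justification of the ``sharper than Nayak'' clause is only that inclusion, which yields ``no worse'' rather than a strict improvement. Your Philips--Nelson step supplies a genuine strict inequality for $1\le M<N$. The facts it relies on are:
\begin{itemize}
\item $t^*=N\bigl(2H_2^{-1}(1-M/N)-1\bigr)\in(0,N)$, so the infimum over $\theta$ is attained at $\theta^*=\frac12\ln\frac{N+t^*}{N-t^*}>0$;
\item $2^{M-1}\ge1$, which makes the $l=0$ term harmless.
\end{itemize}

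Do not expect the case $N=M$ to go through. There the new bound is at least $1$; for example, $N=M=3$ gives $(3^{2l}+3)^{1/(2l)}>3$. So both bounds are vacuous in that case, and the strictness claim should be read for $N>M$.
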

\begin{proof}
Let 
\begin{equation}
    f(x)=f_{\rm even}(x)+f_{\rm odd}(x)
\end{equation}
be the decomposition of a function $f:(-R,R)\rightarrow\rr$ into even and odd parts, defined as $f_{\rm even}(x)=\frac{1}{2}(f(x)+f(-x))$ and $f_{\rm odd}(x)=\frac{1}{2}(f(x)-f(-x))$, respectively.
We assume that $f_{\rm even}$ and $f_{\rm odd}$ are obtained as limits of even and odd analytical functions $g_l:(-R,R)\rightarrow\rr$ and $h_l:(-R,R)\rightarrow\rr$, respectively. That is,
\begin{equation}
    \forall x\in(-R,R),\lim_{l\rightarrow\infty}g_l(x)=f_{\rm even}(x)\ \wedge\ \lim_{l\rightarrow\infty}h_l(x)=f_{\rm odd}(x).
\end{equation}
Even and odd analytical functions can be represented as $\sum_{k=0}^\infty a_{2k}x^{2k}$ and $\sum_{k=0}^\infty a_{2k+1}x^{2k+1}$, respectively.
We further assume that all coefficients of power series of $g_l$ is non-negative.
Note that this is equivalent to requiring that $\hat{g}_l(t)=g_l(\sqrt{t})=\sum_{k=0}^\infty a_{2k}t^{k}$ be analytic and absolutely monotone as a function of $t\in[0,R^2)$.
By letting $f_l=g_l+h_l$, this implies
\begin{equation}
    \forall k,l\in\nn,f^{(2k)}_l(0)=g_l^{(2k)}(0)\geq0.
\end{equation}
Moreover, since 
\begin{equation}
    \forall x\in(-R,R),g_l^{(2)}(x)=\sum_{k=1}^\infty \frac{g_l^{(2k)}(0)}{(2k-2)!}x^{2k-2}\geq0,
\end{equation}
$g_l$ is convex in $(-R,R)$ for all $l\in\nn$.
Since the limit of convex functions is convex, we find that $f_{\rm even}$ is also convex in $(-R,R)$.
Furthermore, we can show that $f_{\rm even}$ is monotonically non-decreasing in $[0,R)$ as follows:
First, since $g_l(x)=g_l(0)+\sum_{k=1}^\infty\frac{g_l^{(2k)}(0)}{(2k)!}x^{2k}\geq g_l(0)$, $\min_{x\in(-R,R)}g_l(x)=g_l(0)$. Thus, in the limit, we obtain $\min_{x\in(-R,R)}f_{\rm even}(x)=f_{\rm even}(0)$.
Second, for any $0<x<y<R$, there exists $p\in(0,1)$ such that $x=py$. By using the convexity of $f_{\rm even}$, we obtain
\begin{eqnarray}
    f_{\rm even}(y)-f_{\rm even}(x)&=&f_{\rm even}(y)-f_{\rm even}(py+(1-p)0)   \\
    &\geq&f_{\rm even}(y)-\left(pf_{\rm even}(y)+(1-p)f_{\rm even}(0)\right)\\
    &=&(1-p)(f_{\rm even}(y)-f_{\rm even}(0))\geq0.
\end{eqnarray}

Since $\lim_{l\rightarrow\infty}f_l(\lambda)=f(\lambda)$ for every eigenvalue $\lambda$ of $\rvarY$ on every event, we can interchange the limit with both the expectation and the trace. Applying Proposition \ref{prop:Moment_bound}, we obtain
\begin{eqnarray}
    \expectation{\tr{f(\rvarY)}}&=&\lim_{l\rightarrow\infty}\expectation{\tr{f_l(\rvarY)}}\\
    &\leq&\lim_{l\rightarrow\infty}\frac{2^M}{2^N}\sum_{k=0}^N
        \begin{pmatrix}
            N\\k
        \end{pmatrix}
        f_l(N-2k)\\
    &=&\frac{2^M}{2^N}\sum_{k=0}^N
        \begin{pmatrix}
            N\\k
        \end{pmatrix}
        f(N-2k)=\frac{2^M}{2^N}\sum_{k=0}^N
        \begin{pmatrix}
            N\\k
        \end{pmatrix}
        f_{\rm even}(N-2k),
\end{eqnarray}
where we used the fact that $\sum_{k=0}^N
        \begin{pmatrix}
            N\\k
        \end{pmatrix}
        f_{\rm odd}(N-2k)=0$ in the last equality.

If $f$ is further assumed to satisfy $f((-R,R))\subseteq[0,\infty)$ and to be convex and monotonically non-decreasing, then Proposition \ref{prop:Exp_Markov} implies
\begin{equation}
\label{eq:finiteNayak}
    \frac{2^M}{2^N}\sum_{k=0}^N
        \begin{pmatrix}
            N\\k
        \end{pmatrix}
        f_{\rm even}(N-2k)<f(t)\quad \Rightarrow\quad \expectation{\lambda_{\max}(\rvarY)}< t.
\end{equation}
We will show that, in order to obtain sharper bounds on $\expectation{\lambda_{\max}(\rvarY)}$, we can assume $f(0)=0$ and $f_{\rm odd}(x)=f_{\rm even}(x)$ for $x\in[0,R)$ without loss of generality as follows:
Let $t\in(0,R)$ satisfy the condition in Eq.~\eqref{eq:finiteNayak}.
Since 
$2f_{\rm even}(t)=f(t)+f(-t)\geq f(t)$,
we find 
\begin{eqnarray}
    f_{\rm even}(t)\geq\frac{f(t)}{2}>\frac{2^M}{2\cdot2^N}\sum_{k=0}^N
        \begin{pmatrix}
            N\\k
        \end{pmatrix}f_{\rm even}(0)=\frac{2^M}{2}f(0),
\end{eqnarray}
where we used the fact that $\min_{x\in(-R,R)}f_{\rm even}(x)=f_{\rm even}(0)$ in the second inequality.

Since $f_{\rm even}(t)>f(0)$, we can define
\begin{equation}
    \hat{f}(x)=\left\{
    \begin{array}{ll}
         0 &  {\rm if\ }x\in(-R,0)\\
         \frac{f(t)}{f_{\rm even}(t)-f(0)}(f_{\rm even}(x)-f(0)) & {\rm if\ } x\in[0,R).
    \end{array}
    \right.
\end{equation}
We can verify that $\hat{f}$ is convex and monotonically non-decreasing since
it smoothly glues two convex and monotonically non-decreasing functions.
Moreover, we can verify that $\hat{f}_{\rm even}$ is obtained as limits of even analytical functions whose coefficients in the power series are non-negative since $\hat{f}_{\rm even}(x)=\frac{1}{2}\frac{f(t)}{f_{\rm even}(t)-f(0)}(f_{\rm even}(x)-f(0))$.
We can also verify that $\hat{f}_{\rm odd}$ is obtained as limits of odd analytical functions owing to the Stone-Weierstrass theorem.

Then, we can verify that the condition in Eq.~\eqref{eq:finiteNayak} is satisfied by $\hat{f}$ as follows.
\begin{eqnarray}
    &&\frac{2^M}{2^N}\sum_{k=0}^N
        \begin{pmatrix}
            N\\k
        \end{pmatrix}
        \hat{f}_{\rm even}(N-2k)\\
        &=&
        \frac{f(t)}{f_{\rm even}(t)-f(0)}\frac{2^M}{2\cdot 2^N}\sum_{k=0}^N
        \begin{pmatrix}
            N\\k
        \end{pmatrix}
        (f_{\rm even}(N-2k)-f(0))\\
        &<&\frac{f(t)}{f_{\rm even}(t)-f(0)}\frac{f(t)-2^Mf(0)}{2}\\
        &\leq&\frac{f_{\rm even}(t)-2^{M-1}f(0)}{f_{\rm even}(t)-f(0)}f(t)\\
        &\leq&f(t)=\hat{f}(t),
\end{eqnarray}
where we used the fact that $2f_{\rm even}(t)=f(t)+f(-t)\geq f(t)$ in the second inequality.
This proves that we can assume $f(0)=0$ and $f_{\rm odd}(x)=f_{\rm even}(x)$ for $x\in[0,R)$ without loss of generality in order to obtain sharper bounds on $\expectation{\lambda_{\max}(\rvarY)}$ via Eq.~\eqref{eq:finiteNayak}.
Since $f(t)=2f_{\rm even}(t)$ if $0\leq t< R$ under this assumption, we obtain
\begin{equation}
\label{eq:finiteNayak2}
\expectation{\lambda_{\max}(\rvarY)}
        \leq\inf\left\{t\in(0,R):\inf_{ f_{\rm even}}\frac{2^M}{2^N}\sum_{k=0}^N
        \begin{pmatrix}
            N\\k
        \end{pmatrix}
        \frac{f_{\rm even}(N-2k)}{2f_{\rm even}(t)}<1\right\},
\end{equation}
where the inside infimum is taken over non-zero and even polynomial functions whose coefficients are non-negative.
By letting $f_{\rm even}(x)=\sum_lp_lx^{2l}$ with non-negative coefficients $p_l$, we obtain
\begin{equation}
    \sum_{k=0}^N
        \begin{pmatrix}
            N\\k
        \end{pmatrix}
        \frac{f_{\rm even}(N-2k)}{2f_{\rm even}(t)}=\frac{1}{2}
        \frac{\sum_lp_l\sum_{k=0}^N
        \begin{pmatrix}
            N\\k
        \end{pmatrix}(N-2k)^{2l}}{\sum_lp_lt^{2l}}
\end{equation}
Since
\begin{equation}
    \frac{a+pb}{c+pd}=q\frac{a}{c}+(1-q)\frac{b}{d}\geq\min\left\{\frac{a}{c},\frac{b}{d}\right\}
\end{equation}
holds for any positive reals $a,b,c,d,p$,
where $q=(1+\frac{d}{c}p)^{-1}\in(0,1)$, we find
\begin{equation}
    \inf_{ f_{\rm even}}\sum_{k=0}^N
        \begin{pmatrix}
            N\\k
        \end{pmatrix}
        \frac{f_{\rm even}(N-2k)}{2f_{\rm even}(t)}=
    \inf_{l\in\nn}\sum_{k=0}^N
        \begin{pmatrix}
            N\\k
        \end{pmatrix}
        \frac{(N-2k)^{2l}}{2t^{2l}}.
\end{equation}
Hence, Eq.~\eqref{eq:finiteNayak2} implies that
\begin{equation}
    \label{eq:finite_QRAC_bound}
    \expectation{\lambda_{\max}(\rvarY)}\leq\inf_{l\in\nn}\left(\frac{2^M}{2^{N+1}}\sum_{k=0}^N
        \begin{pmatrix}
            N\\k
        \end{pmatrix}
        (N-2k)^{2l}\right)^{\frac{1}{2l}}.
\end{equation}
Since the optimization on $f$ is performed over a broad class of functions that includes the exponential function, the upper bound on \(p_{\rm avg}\) obtained via Eq.~\eqref{eq:avgprob} is sharper than the Nayak bound.
\end{proof}

By varying $l$ and applying Theorem~\ref{thm:finite_QRAC_bound}, we obtain several closed-form upper bounds on the average-case decoding probability $p_{\rm avg}$, as summarized in Table~\ref{tab:finite_QRAC_bounds}. We also plot the upper bounds obtained from Theorem~\ref{thm:finite_QRAC_bound} in Fig.~\ref{fig:finite_QRAC_bounds}.

\begin{table}
    \centering
    \begin{tabular}{cc}
    \hline
        $l$ & upper bound on $p_{\rm avg}$ \\
    \hline
    \hline
       1  &  $\frac{1}{2}\left(1+\sqrt{\frac{2^{M-1}}{N}}\right)$ \\
       2  &  $\frac{1}{2}\left(1+\sqrt{\frac{1}{N}}\left(2^{M-1}\left(3-\frac{2}{N}\right)\right)^{\frac{1}{4}}\right)$ \\
       3 & $\frac{1}{2}\left(1+\sqrt{\frac{1}{N}}\left(2^{M-1}\left(15-\frac{30}{N}+\frac{16}{N^2}\right)\right)^{\frac{1}{6}}\right)$ \\
        4 & $\frac{1}{2}\left(1+\sqrt{\frac{1}{N}}\left(2^{M-1}\left(105-\frac{420}{N}+\frac{588}{N^2}-\frac{272}{N^3}\right)\right)^{\frac{1}{8}}\right)$ \\
        \hline
    \end{tabular}
    \caption{Closed-form upper bounds on $p_{\rm avg}$ obtained from Theorem~\ref{thm:finite_QRAC_bound}. Note that the case \(l=1\) recovers the result of Man\v{c}inska and Storgaard~\cite{MS22}.}
    \label{tab:finite_QRAC_bounds}
\end{table}

\begin{figure}
    \centering
    \includegraphics[width=0.8\linewidth]{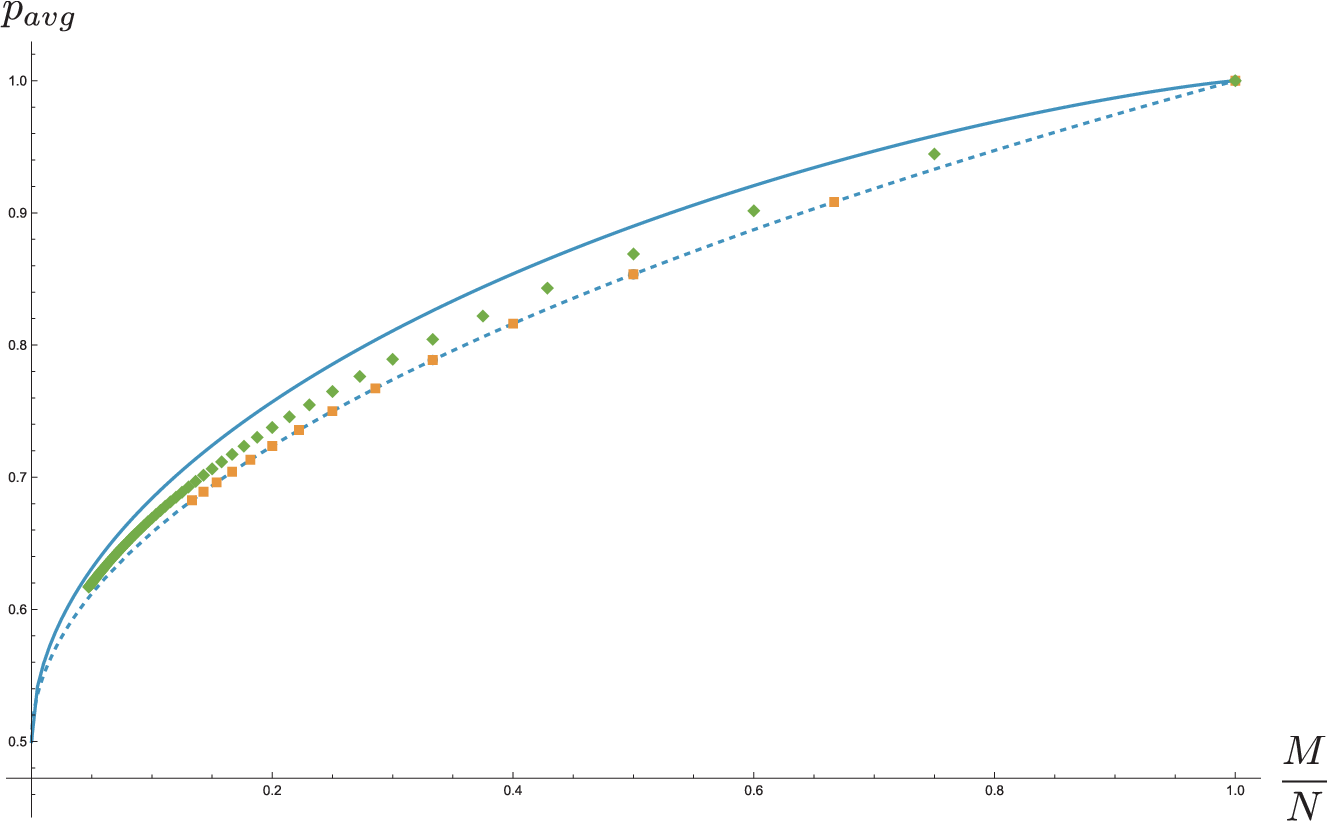}
    \caption{Comparison of upper bounds on the average-case decoding probability $p_{\rm avg}$. The solid curve represents the Nayak bound. The dashed curve represents the conjectured bound.
    The orange points represent the finite-size bounds obtained from Theorem~\ref{thm:finite_QRAC_bound} for \(M=2\), while the green points represent the corresponding new bounds for \(M=3\). 
    For the finite-size bounds, we take \(N\in\{M,M+1,\ldots,4^M-1\}\).}
    \label{fig:finite_QRAC_bounds}
\end{figure}

\section{Geometric constraints on optimal decoders}
\label{sec:Opt_decoder_geometry}
\subsection{Two-qubit QRACs attaining the conjectured bound}
    Since our bound matches the conjectured bound (Eq.~\eqref{eq:num_conj}) for $M\leq2$, we investigate the equality conditions required to attain it.  
    \begin{theorem}
        \label{thm:2qubit_QRAC}
        An $(N,2)$-QRAC attains the average-case decoding probability $p_{\rm avg}=\frac{1}{2}\left(1+\sqrt{\frac{2}{N}}\right)$ if and only if it attains the worst-case decoding probability $p_{\rm wc}=\frac{1}{2}\left(1+\sqrt{\frac{2}{N}}\right)$, and this holds if and only if its two-qubit decoding observables $\{\sigma^{(n)}\}_{n=1}^N$ satisfy the following conditions:
        \begin{enumerate}
            \item $\forall n,\ (\sigma^{(n)})^2=\idop\ \wedge\ \tr{\sigma^{(n)}}=0$,
            \item $\forall n,m,\ n\neq m\Rightarrow\tr{\sigma^{(n)}\sigma^{(m)}}=0$,
            \item $\forall n,\ \sum_m\sigma^{(m)}\sigma^{(n)}\sigma^{(m)}=2\sigma^{(n)}$,
            \item $\sum_{\pi\in S_3}\sigma^{(\pi(k))}\sigma^{(\pi(l))}\sigma^{(\pi(m))}=0$ holds for any distinct $k,l,m$.
        \end{enumerate}
    \end{theorem}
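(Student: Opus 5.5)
We will prove the chain of equivalences via the $l=1$ and $l=2$ cases of the moment method behind Theorem~\ref{thm:finite_QRAC_bound}, specialised to $M=2$. With $\rvarY=\sum_n x_n\sigma^{(n)}$, the $l=1$ computation gives
\[
\expectation{\lambda_{\max}(\rvarY)}\le \Bigl(\tfrac12\expectation{\tr{\rvarY^2}}\Bigr)^{1/2}\le \Bigl(\tfrac{2^M}{2}N\Bigr)^{1/2}=\sqrt{2N}.
\]
This bound is exactly $p_{\rm avg}\le\frac12(1+\sqrt{2/N})$.

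\textbf{Equality conditions for the average case.} We trace when every inequality in this chain is tight.
\begin{enumerate}
\item The step $\tr{(\sigma^{(n)})^2}\le 4$ is tight iff $(\sigma^{(n)})^2=\idop$, given $-\idop\le\sigma^{(n)}\le\idop$.
\item The cross terms vanish in expectation automatically. So $\expectation{\tr{\rvarY^2}}=\sum_n\tr{(\sigma^{(n)})^2}$, and tightness forces $\tr{\rvarY^2}=4N$ on average.
\item The step $\lambda_{\max}\le(\tfrac12\tr{\rvarY^2})^{1/2}$ uses the trace bound with $f(x)=x^2$ restricted to $x\ge0$ (the $\hat f$ construction). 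It is tight iff, for almost every $x$, $\rvarY$ has exactly one eigenvalue equal to $\lambda_{\max}$ and all other eigenvalues nonpositive, with $f(\lambda_{\max})$ carrying half the trace of $\rvarY^2$.
\item Jensen's step is tight iff $\lambda_{\max}(Y_x)$ is constant in $x$.
\end{enumerate}
Since $Y_{-x}=-Y_x$, these together force the spectrum of every $Y_x$ to be $\{\sqrt{2N},-\sqrt{2N}\}$ with multiplicity $2$ each, i.e. $Y_x^2=2N\idop$ and $\tr{Y_x}=0$ for all $x$.

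Conversely, if $Y_x^2=2N\idop$ and $\tr{Y_x}=0$ for every $x$, then $\lambda_{\max}(Y_x)=\sqrt{2N}$ for all $x$. By Eq.~\eqref{eq:avgprob} and the minimax formula Eq.~\eqref{eq:worstprob_exact} (with uniform $p$), both $p_{\rm avg}$ and $p_{\rm wc}$ then reach the value $\frac12(1+\sqrt{2/N})$. The worst-case needs care, since Eq.~\eqref{eq:worstprob} is only an upper bound. We will instead directly check that the encoding $\rho_x=$ projector onto the $+\sqrt{2N}$ eigenspace, normalized, gives each bit the same success probability. This amounts to $\tr{\rho_x x_n\sigma^{(n)}}=\sqrt{2/N}$ for every $n$, which should follow from $\{Y_x,\sigma^{(n)}\}$-type identities derived below (condition 3). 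The direction $p_{\rm wc}\Rightarrow p_{\rm avg}$ is immediate, since $p_{\rm wc}\le p_{\rm avg}\le$ the bound.

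\textbf{Translating into conditions 1--4.} Expand
\[
Y_x^2=\sum_n(\sigma^{(n)})^2+\sum_{n<m}x_nx_m\{\sigma^{(n)},\sigma^{(m)}\}.
\]
Requiring $Y_x^2=2N\idop$ for all $x$ gives, by Fourier analysis in $x$, that $\sum_n(\sigma^{(n)})^2=2N\idop$ and $\{\sigma^{(n)},\sigma^{(m)}\}=0$ for $n\ne m$. But pairwise anticommutation of more than five $4\times4$ involutions is impossible. Moreover, the pairwise condition would give $N$-independent constraints, while the theorem lists conditions 3 and 4 instead. This signals that the true equality condition is weaker than $Y_x^2=2N\idop$ for all $x$.

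So the equality analysis must be done more carefully, using $l=2$ as the sharper test. The $l=2$ bound for $M=2$ equals $(2(3-2/N))^{1/4}\sqrt N\ge\sqrt{2N}$, with the $l=1$ bound being the binding one. Tightness of $l=1$ therefore requires only the \emph{averaged} statements:
\begin{itemize}
\item $\expectation{\tr{\rvarY^2}}=4N$, which is condition 1 with $\tr{(\sigma^{(n)})^2}=4$;
\item $\lambda_{\max}(Y_x)$ constant and equal to $\sqrt{2N}$;
\item $\tr{Y_x}=0$ and $\tr{Y_x^2}=4N$ for every $x$, i.e. $\tr{\sigma^{(n)}}=0$ and $\tr{\sigma^{(n)}\sigma^{(m)}}=0$, which are conditions 1--2;
\item $\tr{Y_x^3}=0$ for every $x$, which by symmetrizing the multilinear cubic coefficients gives $\tr{\sigma^{(n)}\{\sigma^{(m)},\sigma^{(k)}\}}$-type vanishing and the trace part of condition 4.
\end{itemize}
For a traceless $4\times4$ Hermitian matrix with these eigenvalue constraints, the spectrum must be $(\sqrt{2N},a,b,c)$ with $a+b+c=-\sqrt{2N}$ and $a,b,c\le0$, plus the moment constraints.

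The operator-level conditions come from the first-order optimality of the eigen-encoding. The set of $x$ is finite, so "every $x$" is used instead of "almost every". Identities holding for all $x$ as matrix polynomials in $x$ then yield condition 3 from the degree-2 part of $Y_x\sigma^{(n)}Y_x$-type expressions, and condition 4 from the fully symmetric degree-3 part.

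\textbf{Main obstacle.} The crux is deriving the exact spectral characterization of each $Y_x$ at equality, that is, which polynomial identity $Y_x$ must satisfy, and converting it into the operator identities 3--4. I would first try to show that $Y_x^3-2NY_x$ (or a similar cubic) vanishes for all $x$, using only the trace constraints plus $\lambda_{\max}=\sqrt{2N}$ and $\lambda_{\min}=-\sqrt{2N}$. I would then extract coefficients of $x_n$ and $x_kx_lx_m$ to obtain exactly conditions 3 and 4. The converse direction then reverses this: conditions 1--4 give $Y_x^3=2NY_x$, so the spectrum lies in $\{0,\pm\sqrt{2N}\}$, and the traces force $\lambda_{\max}=\sqrt{2N}$.
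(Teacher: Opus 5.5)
\textbf{There is a genuine gap: the spectrum of $Y_x$ at equality is never established.} This step carries the whole proof, and you leave it open as the ``main obstacle''.

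Your first attempt concludes that every $Y_x$ has eigenvalues $\pm\sqrt{2N}$, each with multiplicity two. That contradicts your own item 3. Combined with $Y_{-x}=-Y_x$, item 3 gives the following.
\begin{itemize}
\item For every $x$, $\lambda_{\max}(Y_x)^2\le \mathrm{tr}\,f(Y_x)=\sum_{\lambda_i>0}\lambda_i^2$.
\item Equality holds in expectation, and $x$ is uniform on a finite set, so equality holds for \emph{each} $x$. Hence $Y_x$ has exactly one positive eigenvalue, and the Jensen step makes it $\sqrt{2N}$.
\item Applying the same to $Y_{-x}=-Y_x$, $Y_x$ has exactly one negative eigenvalue, $-\sqrt{2N}$.
\end{itemize}
So the spectrum is $(\sqrt{2N},0,0,-\sqrt{2N})$ for every $x$. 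Equivalently, $\mathrm{tr}\,Y_x=0$, $\mathrm{tr}\,Y_x^2=4N$ and $Y_x^3=2NY_x$ for all $x$. This is exactly the cubic identity you were guessing at, and it is pointwise.

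Several parts of the proposal then fall away.
\begin{itemize}
\item Your retreat to ``averaged statements'' is unfounded, since equality holds for each $x$.
\item The $l=2$ detour is irrelevant. By your own computation it is a weaker bound for $M=2$, so it says nothing about equality in the $l=1$ bound.
\item Your list asserts $\mathrm{tr}\,Y_x=0$ and $\mathrm{tr}\,Y_x^2=4N$ for every $x$ without derivation. These follow only from the spectrum above.
\end{itemize}
Once the identity is available, no ``first-order optimality'' argument is needed. You read off the multilinear coefficients of $Y_x^3-2NY_x$ after reducing $x_i^2=1$:
\begin{itemize}
\item Using $(\sigma^{(n)})^2=\idop$, the coefficient of $x_n$ is $\sum_m\sigma^{(m)}\sigma^{(n)}\sigma^{(m)}-2\sigma^{(n)}$. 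Its vanishing is condition 3.
\item The coefficient of $x_kx_lx_m$ is the symmetrized triple product. Its vanishing is condition 4.
\item The trace identities give conditions 1--2.
\end{itemize}
Your converse direction (conditions $\Rightarrow Y_x^3=2NY_x\Rightarrow$ spectrum) is correct and matches the paper.

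\textbf{On the worst case, your caution is warranted.} Eq.~\eqref{eq:worstprob} only bounds $p_{\rm wc}$ from above, so constancy of $\lambda_{\max}(Y_x)$ alone does not give the matching lower bound. The paper's proof passes over this point quickly. The argument that closes it does not come from condition 3, though:
\begin{itemize}
\item With the spectrum above, the top eigenprojector is $\rho_x=(Y_x^2+\sqrt{2N}\,Y_x)/(4N)$.
\item Conditions 1--2 give $\mathrm{tr}(Y_x\sigma^{(n)})=4x_n$.
\item Condition 1 together with the trace of condition 4, namely $\mathrm{tr}(\sigma^{(n)}\{\sigma^{(k)},\sigma^{(l)}\})=0$ for distinct indices, gives $\mathrm{tr}(Y_x^2\sigma^{(n)})=0$.
\end{itemize}
Hence $\mathrm{tr}(\rho_x\,x_n\sigma^{(n)})=\sqrt{2/N}$ for every $n$ and $x$. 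This encoding therefore attains $p_{\rm wc}=\frac12(1+\sqrt{2/N})$, which closes the chain of equivalences.
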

    \begin{proof}
    Since our bound (for $l=1$) implies $p_{\rm wc}\leq p_{\rm avg}\leq\frac{1}{2}\left(1+\sqrt{\frac{2}{N}}\right)$, $p_{\rm wc}=\frac{1}{2}\left(1+\sqrt{\frac{2}{N}}\right)$ implies $p_{\rm avg}=\frac{1}{2}\left(1+\sqrt{\frac{2}{N}}\right)$. We show the opposite direction as follows.
    
    Let $\rvarY=\sum_{n=1}^Nx_n\sigma^{(n)}$ and $x_n\in\{\pm1\}$ are sampled from iid distribution. Then, by assuming $p_{\rm avg}=\frac{1}{2}\left(1+\sqrt{\frac{2}{N}}\right)$, Eq.~\eqref{eq:avgprob} implies
    \begin{equation}
        \label{eq:tight_QRAC_cal1}
        \expectation{\topeigval{\rvarY}}=\sqrt{2N}.
    \end{equation}        
    Define $f:\rr\rightarrow[0,\infty)$ as
        \begin{equation}
            f(x)=\left\{
    \begin{array}{ll}
         0 &  {\rm if\ }x<0\\
         x^2 & {\rm if\ } x\geq0,
    \end{array}
    \right.
\end{equation}
    By using the same argument in the proof of Proposition~\ref{prop:Exp_Markov}, we find
    \begin{eqnarray}
            \label{eq:tight_QRAC_cal2}
            \expectation{\topeigval{\rvarY}}^2\leq\expectation{\topeigval{\rvarY}^2}\leq\expectation{\tr{f(\rvarY)}}.
    \end{eqnarray}
    Proposition~\ref{prop:Moment_bound} implies that
    \begin{equation}
        \label{eq:tight_QRAC_cal3}
        \expectation{\tr{f(\rvarY)}}\leq\frac{4}{2^N}\sum_{k=0}^N
        \begin{pmatrix}
            N\\k
        \end{pmatrix}f(N-2k)=\frac{2}{2^N}\sum_{k=0}^N
        \begin{pmatrix}
            N\\k
        \end{pmatrix}(N-2k)^2=2N.
    \end{equation}
    Eqs.~\eqref{eq:tight_QRAC_cal1}\eqref{eq:tight_QRAC_cal2}\eqref{eq:tight_QRAC_cal3} imply
    \begin{eqnarray}
            \label{eq:tight_QRAC_cal4}
            \expectation{\topeigval{\rvarY}}^2=\expectation{\topeigval{\rvarY}^2}=\expectation{\tr{f(\rvarY)}}=2N.
    \end{eqnarray}
    By Jensen's inequality, the first equation implies $\topeigval{\rvarY}=\sqrt{2N}$ for all events. This shows $p_{\rm wc}=\frac{1}{2}\left(1+\sqrt{\frac{2}{N}}\right)$.
    Moreover, the second equation implies the eigenvalues of $\rvarY$ must be $(\sqrt{2N},0,0,-\sqrt{2N})$ for all events.

    Observe that $\lambda(\rvarY)=(\sqrt{2N},0,0,-\sqrt{2N})$ if and only if
    \begin{equation}
        \label{eq:cond_optimalQRAC_Y}
         \tr{\rvarY}=0,\quad\tr{\rvarY^2}=4N,\quad \rvarY^3=2N\rvarY.
    \end{equation}
    We derive the equivalent conditions for Eq.~\eqref{eq:cond_optimalQRAC_Y} in terms of decoding observables $\sigma^{(n)}$.

Since any polynomial \(p(x)\) admits a unique linear form in the monomials $\left(\prod_{i\in S}x_i\right)_{S\subseteq[N]}$ on the Boolean cube $x\in\{\pm 1\}^N$:
\[
p(x)=\sum_{S\subseteq[N]}\widehat p(S)\prod_{i\in S}x_i,
\] 
$\forall x\in\{\pm1\}^N,p(x)=0$ holds if and only if all Fourier coefficients vanish:
\[
\widehat p(S)=0\quad\text{for every }S\subseteq[N].
\]
By calculating the Fourier coefficients, Eq.~\eqref{eq:cond_optimalQRAC_Y} is equivalent to
\begin{eqnarray}
    \forall n,\tr{\sigma^{(n)}}=0,\\
    \sum_{n=1}^N\tr{(\sigma^{(n)})^2}=4N,\quad\forall n\neq m,\tr{\sigma^{(n)}\sigma^{(m)}}=0,\\
    \forall n,\ \sum_m\sigma^{(m)}\sigma^{(n)}\sigma^{(m)}=2\sigma^{(n)},\\
    \sum_{\pi\in S_3}\sigma^{(\pi(k))}\sigma^{(\pi(l))}\sigma^{(\pi(m))}=0 {\rm\ holds\  for\  any\  distinct\ } k,l,m.
\end{eqnarray}
This completes the proof.
\end{proof}

By using Theorem~\ref{thm:2qubit_QRAC}, we can show that the finite-size QRAC bound in Theorem~\ref{thm:finite_QRAC_bound} is not tight in general. 
Indeed, the first three conditions in Theorem~\ref{thm:2qubit_QRAC} cannot be simultaneously satisfied when \(N>6\).

Let $\mathcal{E}(X)=\sum_{n=1}^N\sigma^{(n)}X\sigma^{(n)}$. 
Let $\{\hat{\sigma}^{(n)}\}_{n=0}^{15}$ be an orthonormal basis of  two-qubit Hermitian matrices such that $\hat{\sigma}^{(0)}=\frac{1}{2}\idop$ and $\hat{\sigma}^{(n)}=\frac{1}{2}\sigma^{(n)}$ for $n\in[N]$.
We define the matrix representation of \(\mathcal{E}\) in this basis by $\hat{\mathcal{E}}_{mn}:=\tr{\hat{\sigma}^{(m)}\mathcal{E}(\hat{\sigma}^{(n)})}$.
Since \(\mathcal{E}\) is self-adjoint with respect to the Hilbert--Schmidt inner product, \(\hat{\mathcal{E}}\) is a \(16\times16\) real symmetric matrix and therefore has \(16\) real eigenvalues.

From the first and third conditions, we have
\begin{equation}
    \mathcal{E}(\sigma^{(n)})=2\sigma^{(n)},\quad\mathcal{E}(\idop)=N\idop.
\end{equation}
Thus, the eigenvalues of \(\hat{\mathcal{E}}\) can be written as $(N,2,\cdots,2,\lambda_1,\cdots,\lambda_{15-N})$.

We next compute the trace of \(\hat{\mathcal{E}}\). We have
\begin{eqnarray}
    \tr{\hat{\mathcal{E}}}&=&\sum_{m=0}^{15}\tr{\hat{\sigma}^{(m)}\mathcal{E}(\hat{\sigma}^{(m)})}\\
    &=&\sum_{n=1}^N\sum_{m=0}^{15}\tr{\hat{\sigma}^{(m)}\sigma^{(n)}\hat{\sigma}^{(m)}\sigma^{(n)}}\\
    &=&\sum_{n=1}^N\sum_{m=0}^{15}\bra{\hat{\sigma}^{(m)}}\left((\sigma^{(n)})^T\otimes\sigma^{(n)}\right)\ket{\hat{\sigma}^{(m)}}\\
    &=&\sum_{n=1}^N\tr{(\sigma^{(n)})^T\otimes\sigma^{(n)}}\\
    \label{eq:2Q_QRAC_impossible1}
    &=&\sum_{n=1}^N\left|\tr{\sigma^{(n)}}\right|^2=0.
\end{eqnarray}
Here we used that $\{\ket{\hat{\sigma}^{(m)}}:=\sum_{i=0}^3\ket{i}\otimes(\hat{\sigma}^{(m)}\ket{i})\}_{m=0}^{15}$ forms an orthonormal basis in the fourth equality. Since $\tr{\hat{\mathcal{E}}}=N+2N+\sum_{m=1}^{15-N}\lambda_m=0$, Eq.~\eqref{eq:2Q_QRAC_impossible1} implies \(N<15\).

Similarly, we obtain
\begin{eqnarray}
    \label{eq:2Q_QRAC_impossible2}
    \tr{\hat{\mathcal{E}}^2}&=&\sum_{n,m=1}^N\left|\tr{\sigma^{(n)}\sigma^{(m)}}\right|^2=16N,
\end{eqnarray}
where the last equality follows from the first and second conditions. 
Combining Eqs.~\eqref{eq:2Q_QRAC_impossible1} and~\eqref{eq:2Q_QRAC_impossible2}, we obtain
\begin{equation}
    N+2N+\sum_{m=1}^{15-N}\lambda_m=0,\quad N^2+4N+\sum_{m=1}^{15-N}\lambda_m^2=16N.
\end{equation}
Equivalently,
\[
    \sum_{m=1}^{15-N}\lambda_m=-3N,
    \qquad
    \sum_{m=1}^{15-N}\lambda_m^2=N(12-N).
\]
Since such real numbers \(\lambda_m\) can exist only if the hyperplane intersects the corresponding hypersphere, by Cauchy-Schwarz, $\left(\sum_{m=1}^{15-N}\lambda_m\right)^2 \le (15-N)\cdot \sum_{m=1}^{15-N}\lambda_m^2$ holds, and gives
\begin{eqnarray}
    \frac{9N}{15-N}\leq 12-N\ \Leftrightarrow\  (N-30)(N-6)\geq0\ \Leftrightarrow\ N\leq 6,
\end{eqnarray}
where we used $N<15$.
Therefore, the first three conditions in Theorem~\ref{thm:2qubit_QRAC} cannot hold for \(N>6\). 
In particular, the finite-size QRAC bound in Theorem~\ref{thm:finite_QRAC_bound} cannot be tight for two-qubit QRACs with \(N>6\).
Additionally, for \(N=5\), we can show non-tightness by a different argument based on a more refined geometric analysis.

\subsection{Mutually unbiased projector-valued measurements}
In the previous section, we have shown that the two-qubit QRACs' decoding observables $\sigma^{(n)}$ are orthogonal traceless involutory matrices with respect to the Hilbert Schmidt inner product if it attains the bounds shown in Theorem~\ref{thm:finite_QRAC_bound}. These conditions are equivalent to requiring that the corresponding POVMs $\{M_+^{(n)},M_-^{(n)}\}$ are projective-valued measures (PVMs) and satisfy
\begin{eqnarray}
    \tr{M_s^{(n)}M_t^{(m)}}&=&\frac{1}{4}\tr{(\idop+s\sigma^{(n)})(\idop+t\sigma^{(m)})}=\frac{D}{d^2}(1-\delta_{nm})+\frac{D}{d}\delta_{st}\delta_{nm}
\end{eqnarray}
for all $n,m$ and for all independent choices of measurement outcomes $s,t\in\{\pm\}$, where $D=4$ and $d=2$.
We can extend this mutually unbiased condition to arbitrary sets of \(D\)-dimensional PVMs with $d$ outcomes, each of whose elements is a projector of equal rank \(D/d\). 
Mutually unbiased bases~\cite{IDIvonovic_1981,WOOTTERS1989363}, or MUBs, are recovered as the special case \(D=d\).

For the analysis of binary QRACs, we only consider binary output measurements. Under this circumstance, the MUPVM condition is equivalent to requiring that the decoding observables \(\sigma^{(n)}\) are traceless and involutory, and satisfy the orthogonality relation
\begin{equation}
\label{eq:MUPVMu}
    \tr{\sigma^{(n)}\sigma^{(m)}}=2^M\delta_{nm}.
\end{equation}
Since the Pauli group satisfies these relations, we can verify that MUPVMs exist with $N=4^M-1$.

To demonstrate the significance of MUPVMs not only in two qubits but also in higher-dimensional settings, we prove the following proposition.
\begin{proposition}
    \label{prop:MUPVM_QRAC}
    For any $M$-qubit MUPVMs $\{M_\pm^{(n)}\}_{n=1}^N$, we can construct an $(N,M+1)$-QRAC whose success probability satisfies
    \begin{equation}
        p_{\rm wc}\geq\frac{1}{2}\left(1+\frac{1}{\sqrt{N}}\right),
    \end{equation}
    and its decoding measurement can be constructed by the quantum circuit depicted in Fig.~\ref{fig:MUPVM_decoder}.
\end{proposition}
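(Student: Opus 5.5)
The plan is to use the exact worst-case formula Eq.~\eqref{eq:worstprob_exact}. It reduces the claim to a spectral lower bound: for every sign pattern $x\in\{\pm\}^N$ and every probability distribution $p$ on $[N]$, we need $\topeigval{\sum_n x_np(n)\tau^{(n)}}\ge 1/\sqrt{N}$, where $\tau^{(n)}$ are the $(M+1)$-qubit decoding observables. The difficulty is that a traceless Hermitian matrix with large Hilbert--Schmidt norm can still have a small top eigenvalue if its spectrum is lopsided. The ancilla will be used to symmetrize the spectrum, so that $\lambda_{\max}$ becomes the operator norm.

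\textbf{Construction.} Given the MUPVM observables $\sigma^{(n)}$ (traceless, involutory, with $\tr{\sigma^{(n)}\sigma^{(m)}}=2^M\delta_{nm}$ as in Eq.~\eqref{eq:MUPVMu}), I will define
\[
\tau^{(n)}:=X\otimes\sigma^{(n)},
\]
where $X$ is the Pauli-$X$ acting on the ancilla qubit. Each $\tau^{(n)}$ is Hermitian and involutory, so $\{\frac12(\idop\pm\tau^{(n)})\}$ is a valid PVM. Operationally, this PVM is realized by measuring the ancilla in the $X$ basis (a Hadamard followed by a computational-basis measurement), measuring the PVM $\{M^{(n)}_\pm\}$ on the $M$-qubit register, and outputting the product of the two $\pm1$ outcomes. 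This is the circuit of Fig.~\ref{fig:MUPVM_decoder}.

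\textbf{Spectral bound.} Fix $x$ and $p$, and set $A:=\sum_n x_np(n)\sigma^{(n)}$. Then $\sum_n x_np(n)\tau^{(n)}=X\otimes A$. Since $X$ has eigenvalues $\pm1$, the spectrum of $X\otimes A$ is $\{\pm a: a \text{ an eigenvalue of } A\}$. Hence $\topeigval{X\otimes A}=\lpnorm{\infty}{A}$. Next I bound the operator norm by the normalized Hilbert--Schmidt norm and apply the orthogonality relation:
\[
\lpnorm{\infty}{A}^2\ \ge\ \frac{\tr{A^2}}{2^M}\ =\ \frac{1}{2^M}\sum_{n,m}x_nx_mp(n)p(m)\tr{\sigma^{(n)}\sigma^{(m)}}\ =\ \sum_n p(n)^2\ \ge\ \frac1N ,
\]
where the last step is Cauchy--Schwarz with $\sum_n p(n)=1$. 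Substituting $\lpnorm{\infty}{A}\ge 1/\sqrt N$ into Eq.~\eqref{eq:worstprob_exact} gives $p_{\rm wc}\ge\frac12\bigl(1+\frac{1}{\sqrt N}\bigr)$.

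\textbf{Main obstacle and remarks.} The only nontrivial step is seeing that the ancilla is what converts $\lambda_{\max}$ into $\lpnorm{\infty}{\cdot}$. Without it, $\lambda_{\max}(A)$ could be much smaller than $\sqrt{\tr{A^2}/2^M}$. Once that observation is made, the rest is routine. Note that only involutivity and the orthogonality relation Eq.~\eqref{eq:MUPVMu} are used: the off-diagonal traces must vanish and the diagonal traces must equal $2^M$. Tracelessness of $\sigma^{(n)}$ is not needed for this bound. Finally, the minimax step in Eq.~\eqref{eq:worstprob_exact} shows that the encoding achieving this value may be mixed. If the paper intends an explicit encoding, it can be taken from the minimax optimizer, and no explicit formula is required for the statement as posed.
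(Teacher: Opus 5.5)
Your proof is correct and is essentially the paper's argument. The paper likewise bounds $\lpnorm{\infty}{\sum_n x_np(n)\sigma^{(n)}}\ge 1/\sqrt{N}$ via the Hilbert--Schmidt norm and the orthogonality relation, then uses the ancilla to make the spectrum symmetric so that $\lambda_{\max}$ equals the operator norm in Eq.~\eqref{eq:worstprob_exact}. The one cosmetic difference is that the paper takes $\tilde{\sigma}^{(n)}=Z\otimes\sigma^{(n)}$, with the ancilla measured in the computational basis exactly as in Fig.~\ref{fig:MUPVM_decoder}, whereas your $X\otimes\sigma^{(n)}$ is unitarily equivalent but adds a Hadamard; drop it to match the figure literally.
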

\begin{figure}
    \centering
    \includegraphics[width=0.5\linewidth]{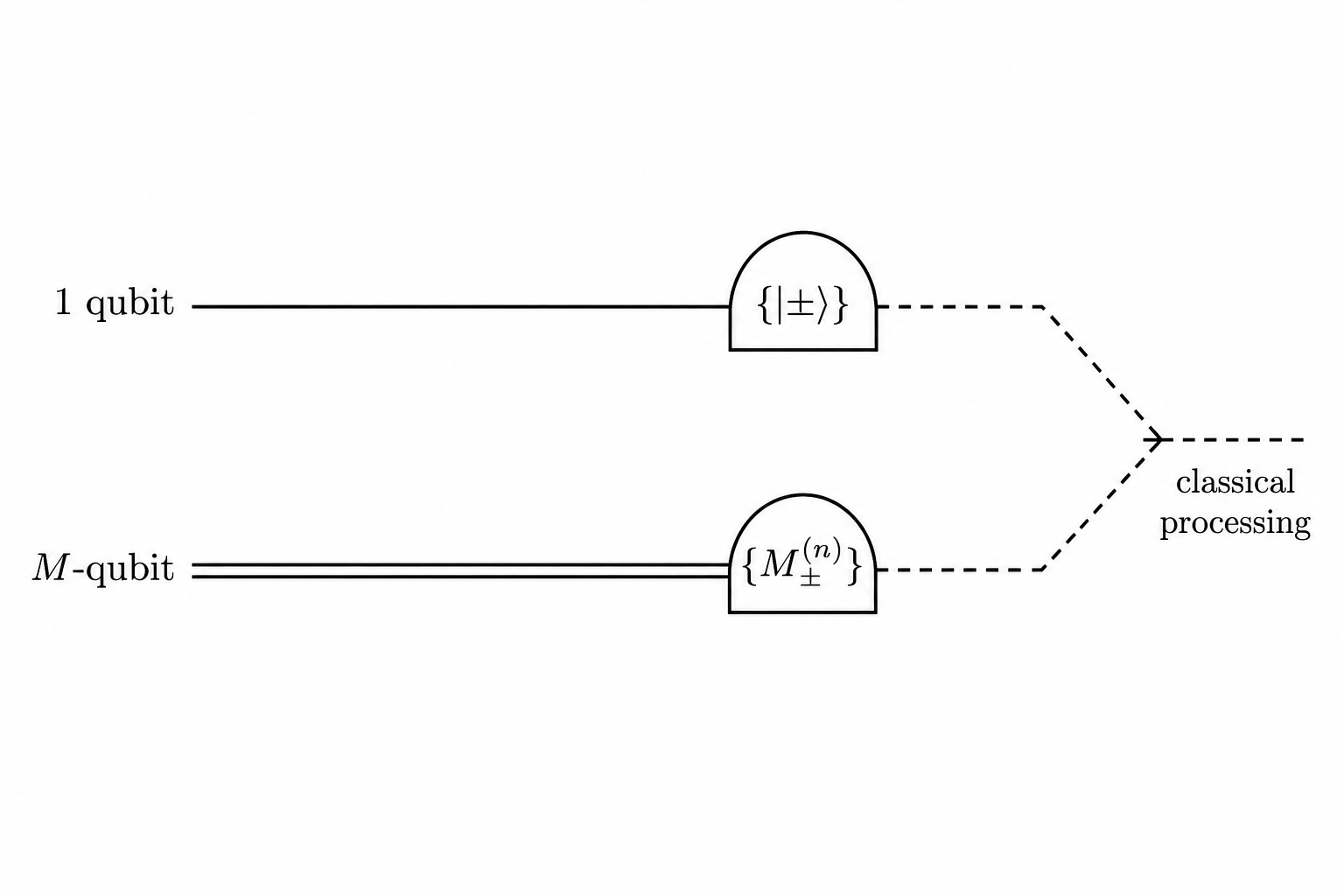}
    \caption{Quantum circuit for the decoding measurement in Proposition~\ref{prop:MUPVM_QRAC}. 
    The first qubit is measured in the computational basis with outcome \(\{\pm\}\), and the remaining \(M\) qubits are measured by the MUPVM with outcome \(\{\pm\}\).
    The classical post-processing outputs \(+\) if the two measurement outcomes coincide, and outputs \(-\) otherwise.}
    \label{fig:MUPVM_decoder}
\end{figure}
Although we consider a slightly different quantity, our proof is based on the argument of Ref.~\cite{DJR22}.
\begin{proof}
    Let $\{\sigma^{(n)}\}_{n=1}^N$ be the decoding observables associated with $\{M_\pm^{(n)}\}_{n=1}^N$.
    First, observe that
    \begin{equation}
        \lpnorm{2}{\sum_{n=1}^Nx_np(n)\sigma^{(n)}}^2=2^M\sum_{n=1}^Np(n)^2\geq \frac{2^M}{N}
    \end{equation}
    for any $x_n\in\{\pm\}$ and probability distribution $p$.
    Since $\sqrt{d}\lpnorm{\infty}{M}\geq\lpnorm{2}{M}$, we obtain
    \begin{equation}
        \lpnorm{\infty}{\sum_{n=1}^Nx_np(n)\sigma^{(n)}}\geq \frac{1}{\sqrt{N}}.
    \end{equation}
    Let $\tilde{\sigma}^{(n)}$ be an $(M+1)$-qubit operator corresponding to the decoding measurement depicted in Fig.~\ref{fig:MUPVM_decoder}. Thus, it can be represented as
    \begin{eqnarray}
        \tilde{\sigma}^{(n)}&=&\begin{pmatrix}
            1&0\\
            0&0
        \end{pmatrix}\otimes M_+^{(n)}+
        \begin{pmatrix}
            0&0\\
            0&1
        \end{pmatrix}\otimes M_-^{(n)}-
        \begin{pmatrix}
            1&0\\
            0&0
        \end{pmatrix}\otimes M_-^{(n)}-
        \begin{pmatrix}
            0&0\\
            0&1
        \end{pmatrix}\otimes M_+^{(n)}
        \\
        &=&\begin{pmatrix}
            1&0\\
            0&-1
        \end{pmatrix}\otimes \sigma^{(n)}.
    \end{eqnarray}
    Since $\lambda(\sum_nr_n\tilde{\sigma}^{(n)})=\{\pm1\}\times\lambda(\sum_nr_n\tilde{\sigma}^{(n)})$ for any $r_n\in\rr$, we find
    \begin{equation}
        \lambda_{\max}\left(\sum_{n=1}^Nx_np(n)\tilde{\sigma}^{(n)}\right)
        =\lpnorm{\infty}{\sum_{n=1}^Nx_np(n)\sigma^{(n)}}.
    \end{equation}
    Combining with Eq.~\eqref{eq:worstprob_exact}, this completes the proof.
\end{proof}
By using Pauli group, we can construct an $(N,\lceil\frac{1}{2}\log_2(N+1)\rceil+1)$-QRAC with the success probability $p_{\rm wc}\geq\frac{1}{2}\left(1+\frac{1}{\sqrt{N}}\right)$.

Using results from matrix discrepancy, we argue this scaling is optimal in its $N$-dependence.
$\min_{x\in \{\pm\}^N}\lpnorm{\infty}{\sum_{n=1}^Nx_n\sigma^{(n)}}$, which provides an upper bound on $p_{\rm wc}$, is the main quantity investigated in the topic of the matrix discrepancy~\cite{HRS22,DJR22,BJM23}.
Indeed, if the conjecture on the matrix Spencer is correct, we obtain
\begin{equation}
    \min_{x\in \{\pm\}^N}\lpnorm{\infty}{\sum_{n=1}^Nx_n\sigma^{(n)}}= O(\sqrt{N\max\{1,M-\log_2 N\}}).
\end{equation}
This implies that there exists a constant $C$ such that for sufficiently large $N$ and $M$,
\begin{equation}
    \label{eq:discrepancy}
    p_{\rm wc}\leq\frac{1}{2}\left(1+C\left(\sqrt{\frac{\max\{1,M-\log_2 N\}}{N}}\right)\right)
\end{equation}
holds. For any positive number $\varepsilon>0$, in the regime $N\geq 2^{(1+\varepsilon)M}$ with large $M$,
the matrix Spencer conjecture has recently been proven~\cite{BJM23}. Thus, in such regime, the worst-case decoding probability given in Proposition \ref{prop:MUPVM_QRAC} has optimal scaling in its $N$-dependence.

\subsection{Relationship between MUPVMs and MUMs}
Farkas et al. have defined the mutually unbiased measurements (MUMs)~\cite{FKN23} as follows: two $D$-dimensional PVMs $\{P_a\}_{a=1}^{d}$ and $\{Q_b\}_{b=1}^{d}$ are MUMs if
\begin{equation}
  \forall a,\forall b,  P_a=d P_aQ_b P_a,\quad Q_b=dQ_bP_aQ_b.
\end{equation}
We can verify that MUMs are MUPVMs since
\begin{eqnarray}
    \forall a,b,\ \tr{P_a}&=&d\tr{P_aQ_b}=\tr{Q_b}=\frac{D}{d},\\
    \forall a,b,\ \tr{P_aQ_b}&=&\tr{P_aQ_bP_a}=\frac{1}{d}\tr{P_a}=\frac{D}{d^2}.
\end{eqnarray}
In the case of $d=2$, MUPVMs are MUMs if and only if the corresponding observable $\{\sigma^{(n)}\}_{n=1}^N$ is an anti-commutative orthonormal set of traceless and involutory matrices, i.e., it satisfies
 \begin{equation}
\forall n\neq m, \{\sigma^{(n)},\sigma^{(m)}\}=0\wedge\tr{\sigma^{(n)}\sigma^{(m)}}=2^M\delta_{n,m}\wedge
\tr{\sigma^{(n)}}=0\wedge(\sigma^{(n)})^2=\idop.
\end{equation}
 
It is known that such a set exists only if $N\leq 2M+1$ and the set of Weyl-Brauer operators is an example of a set with $N= 2M+1$.
We can verify that for any real vector $r\in\rr^{N}$, $M=\sum_{n=1}^Nr_n\sigma^{(n)}$ satisfies
\begin{equation}
\label{eq:antiCMUPVM}
 \tr{M}=0\ \wedge\  M^2=\lpnorm{2}{r}^2\idop.
\end{equation}
This implies that the set of eigenvalues of $M$ is $\{\pm\lpnorm{2}{r}\}$.
Thus, we obtain the following proposition.
\begin{proposition}
    For any $(N,M)$-QRAC using an MUM, the success probability is given by
    \begin{equation}
        p_{\rm wc}=p_{\rm avg}=\frac{1}{2}\left(1+\frac{1}{\sqrt{N}}\right).
    \end{equation}
\end{proposition}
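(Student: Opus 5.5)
The plan is to apply the exact formulas Eq.~\eqref{eq:avgprob} and Eq.~\eqref{eq:worstprob_exact} directly. The MUM property, through Eq.~\eqref{eq:antiCMUPVM}, makes every signed or weighted sum of decoding observables spectrally trivial. Let $\{\sigma^{(n)}\}_{n=1}^N$ be the decoding observables of the MUM. Since they pairwise anticommute, are traceless and involutory, every real combination $\sum_n r_n\sigma^{(n)}$ squares to $\lpnorm{2}{r}^2\idop$ and is traceless. Its spectrum is therefore exactly $\{\pm\lpnorm{2}{r}\}$, so $\topeigval{\sum_n r_n\sigma^{(n)}}=\lpnorm{2}{r}$.

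\textbf{Average case.} Take $r=x\in\{\pm1\}^N$. Then $\topeigval{Y_x}=\sqrt{N}$ for every $x$, so the expectation in Eq.~\eqref{eq:avgprob} equals $\sqrt N$ and $p_{\rm avg}=\frac12(1+\frac{1}{\sqrt N})$.

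\textbf{Worst case.} Take $r_n=x_np(n)$ in Eq.~\eqref{eq:worstprob_exact}. This gives $\topeigval{\sum_n x_np(n)\sigma^{(n)}}=\big(\sum_n p(n)^2\big)^{1/2}$, which is independent of $x$. The minimization over probability distributions $p$ of $\sum_n p(n)^2$ is attained at the uniform distribution, by Cauchy--Schwarz, with value $1/N$. Hence $p_{\rm wc}=\frac12(1+\frac{1}{\sqrt N})$.

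Both quantities are optimal over encodings for the fixed MUM decoder, which is how the paper defines them, so the equality $p_{\rm wc}=p_{\rm avg}$ follows. The only point needing care is Eq.~\eqref{eq:antiCMUPVM} itself. Expanding $(\sum_n r_n\sigma^{(n)})^2$, the cross terms $r_nr_m\{\sigma^{(n)},\sigma^{(m)}\}$ vanish for $n\neq m$, and the diagonal terms contribute $\sum_n r_n^2\idop$. Tracelessness then forces the eigenvalues $\pm\lpnorm{2}{r}$ to appear with equal multiplicity. That multiplicity is not actually needed for $\lambda_{\max}$, only that $\lpnorm{2}{r}$ is an eigenvalue whenever $r\neq 0$. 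Nothing here is a serious obstacle; the step most worth double-checking is that the minimax form in Eq.~\eqref{eq:worstprob_exact} is an equality, so the worst-case value is exact rather than merely an upper bound.
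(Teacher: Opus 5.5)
Your proposal is correct and follows the paper's proof essentially verbatim: you use Eq.~\eqref{eq:antiCMUPVM} to get $\topeigval{\sum_n r_n\sigma^{(n)}}=\lpnorm{2}{r}$, substitute $r=x$ into Eq.~\eqref{eq:avgprob} and $r_n=x_np(n)$ into Eq.~\eqref{eq:worstprob_exact}, and minimize $\lpnorm{2}{p}$ at the uniform distribution. Your extra remarks fill in details the paper leaves implicit: tracelessness guarantees that $+\lpnorm{2}{r}$ really is an eigenvalue, and Eq.~\eqref{eq:worstprob_exact} is an exact minimax identity rather than only a bound.
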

\begin{proof}
    Let $\{\sigma^{(n)}\}_{n=1}^N$ be the decoding observable of a QRAC.
    Since $\topeigval{\sum_{n=1}^Nr_n\sigma^{(n)}}=\lpnorm{2}{r}$ for any real vector $r\in\rr^{N}$, Eq.~\eqref{eq:avgprob} implies
    \begin{eqnarray}
        p_{\rm avg}=\frac{1}{2}\left(1+\frac{1}{N}\expectation{\lpnorm{2}x{}}\right)=\frac{1}{2}\left(1+\frac{1}{\sqrt{N}}\right).
    \end{eqnarray}
    Similarly, by using Eq.~\eqref{eq:worstprob_exact}, we obtain
    \begin{eqnarray}
        p_{\rm wc}&=&\frac{1}{2}\left(1+\min_p\lpnorm{2}{p}\right)=\frac{1}{2}\left(1+\frac{1}{\sqrt{N}}\right).
    \end{eqnarray}
\end{proof}

\section{QRAC constructions from MUPVMs}
\label{sec:QRAC_construction}

As shown in the previous section, the equality conditions of our QRAC bounds naturally lead to the notion of MUPVMs. 
In this section, we present explicit QRAC constructions based on MUPVMs. 
As Theorem~\ref{thm:2qubit_QRAC} shows, however, MUPVMs do not capture all constraints required for optimality. 
We therefore construct QRACs whose decoding observables satisfy several additional constraints, motivated by those appearing in existing QRAC constructions.

\subsection{New construction for \((5,2)\)-QRAC}
We have shown that an \((N,2)\)-QRAC cannot attain the conjectured bound (Eq.~\eqref{eq:num_conj}) when \(N\notin\{2,3,4,6\}\). 
On the other hand, the bound is known to be attainable for \(N\in\{2,3,4,6\}\)~\cite{ANTV99,imamichi2018constructions}. 
In this section, we demonstrate the usefulness of MUPVMs in regimes where an \((N,2)\)-QRAC cannot attain the conjectured bound.

We first construct a $(5,2)$-QRAC with decoding probability $\approx 0.8077$. It is derived from Mutually Unbiased Bases (MUBs).
In the construction, we use Pauli observables $I, X, Y, Z$ or sparse sums of Pauli observables as decoding observables $\sigma^{(n)}$ to guarantee their projectivity and orthogonality.

\begin{proposition}
\label{prop:52}
There exists a $(5,2,1/2(1+\sqrt{5 + 2\sqrt{5}}/5))$-QRAC with decoding observables
\begin{equation}\label{eq:52-obs}
  \sigma^{(1)} = ZI,\quad
  \sigma^{(2)} = XI,\quad
  \sigma^{(3)} = -IY,\quad
  \sigma^{(4)} = -ZX,\quad
  \sigma^{(5)} = -XZ,
\end{equation}
which are orthogonal, traceless, Hermitian involutions, i.e., the associated POVMs are MUPVMs.
Moreover, the decoding probability cannot be improved as long as these observables are used.
\end{proposition}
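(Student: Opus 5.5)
The plan is to compute the spectra of the signed sums $Y_x=\sum_{n=1}^5 x_n\sigma^{(n)}$ exactly. Then Eq.~\eqref{eq:avgprob}, which is an exact formula for fixed decoders, gives $p_{\rm avg}$ directly, and Eq.~\eqref{eq:worstprob_exact} handles the worst case.

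\textbf{MUPVM properties.} These follow from the Pauli algebra. Each $\sigma^{(n)}$ is $\pm$ a non-identity two-qubit Pauli string, so it is Hermitian, traceless and involutory. The five strings $ZI,XI,IY,ZX,XZ$ are distinct, so Eq.~\eqref{eq:MUPVMu} holds.

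\textbf{Commutation structure.} Next I would tabulate the commutation relations. Two Pauli strings anticommute iff they anticommute on an odd number of tensor factors. Checking the ten pairs, the anticommuting pairs are $\{1,2\},\{2,4\},\{4,3\},\{3,5\},\{5,1\}$, a pentagon. The commuting pairs $\{1,3\},\{1,4\},\{2,3\},\{2,5\},\{4,5\}$ form the complementary pentagon. Hence
\[
Y_x^2=5\idop+C_x,\qquad C_x=2\sum_{\{n,m\}\ \text{commuting}} x_nx_m\sigma^{(n)}\sigma^{(m)} .
\]
Each $\sigma^{(n)}\sigma^{(m)}$ in $C_x$ is again $\pm$ a Hermitian Pauli string, and these five strings are distinct and non-identity.

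I would then compute $C_x^2$ by the same bookkeeping, now applied to the five products. The target is to show that they also have pentagonal anticommutation structure, and that the cross terms of the commuting pairs cancel for every sign pattern $x$. Then $C_x^2=20\idop$ identically in $x$, the eigenvalues of $Y_x^2$ are $5\pm2\sqrt5$, and
\[
\lambda_{\max}(Y_x)=\sqrt{5+2\sqrt5}\quad\text{for all }x\in\{\pm\}^5 .
\]
This is consistent with $\tr{Y_x^2}=20$ and $\tr{Y_x}=0$, which fix the spectrum as $\pm\sqrt{5\pm2\sqrt5}$. Equation~\eqref{eq:avgprob} then gives $p_{\rm avg}=\frac12\bigl(1+\sqrt{5+2\sqrt5}/5\bigr)$ exactly. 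Since this formula already optimizes over all encodings for the given observables, the probability cannot be improved with these decoders, and $p_{\rm wc}\le p_{\rm avg}$ gives the same bound for the worst case.

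\textbf{Worst case.} For the lower bound on $p_{\rm wc}$, I would take $\rho_x$ to be the top eigenprojector of $Y_x$, assuming it is rank one. The aim is to show $\tr{\rho_x x_n\sigma^{(n)}}=\sqrt{5+2\sqrt5}/5$ for every $n$. Then $\sum_n p(n)\tr{\rho_x x_n\sigma^{(n)}}$ is independent of $p$, so the inner max-min in Eq.~\eqref{eq:worst_prob_cal} is attained and $p_{\rm wc}$ equals the stated value.

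The equal-share property is the step I expect to be the main obstacle. I would first try a symmetry argument: find Clifford unitaries realizing the rotation of the pentagon, mapping each $\sigma^{(n)}$ to $\pm\sigma^{(n+1)}$ up to relabeling, so that they act transitively on the indices while permuting the sign patterns. If that fails, I would fall back on direct computation: for each $x$ up to the global sign and the symmetries, diagonalize $Y_x$ explicitly and check the five expectation values.
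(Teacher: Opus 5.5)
Your overall route is the paper's: expand $Y_x^2=5\idop+C_x$, show $C_x^2=20\idop$, conclude $\topeigval{Y_x}=\sqrt{5+2\sqrt5}$ for every $x$, apply Eq.~\eqref{eq:avgprob}, and get optimality from $p_{\rm wc}\le p_{\rm avg}$. Your use of $\tr{Y_x}=0$ to force $+\sqrt{5+2\sqrt5}$ into the spectrum is correct, and the paper leaves that point implicit. Your commutation table is also correct.

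The problem is the step you call the target. Up to sign, the five products over the commuting pairs are $ZY, IX, XY, IZ, YY$. They do \emph{not} have pentagonal structure: all ten pairs anticommute, so they form a maximal anticommuting set of two-qubit Paulis. That is the actual reason $C_x^2=20\idop$: there are no cross terms at all. The mechanism you describe cannot occur. In $C_x^2$, the cross term of the products for commuting pairs $e,e'$ carries the sign $\prod_{n\in e\triangle e'}x_n$. The ten symmetric differences are $\{1,2\},\{2,4\},\{3,4\},\{3,5\},\{1,5\}$ together with the complements of the five singletons, and these are pairwise distinct. So no two cross terms can cancel for all $x$. If any two products commuted, $C_x^2\neq 20\idop$ for some $x$ and your conclusion would fail. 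You must replace this step with the direct check of full mutual anticommutation, which is exactly what the paper does.

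For the worst case you follow the paper, which takes top eigenvectors and cites a ``straightforward calculation''. The rank-one assumption is automatic, since $\sqrt{5+2\sqrt5}>\sqrt{5-2\sqrt5}$ makes the top eigenvalue simple. Define $g(x,n):=x_n\bra{\psi_x}\sigma^{(n)}\ket{\psi_x}$. Your symmetry sketch is incomplete as stated: a Clifford rotating the pentagon only gives $g(x,n)=g(x',\pi(n))$, with $x'\neq x$ in general. You would also need symmetries acting transitively on sign patterns at fixed $n$. Pauli conjugations give the $16$ patterns with $\prod_n s_n=1$, and complex conjugation, which flips only $\sigma^{(3)}$, supplies the rest.

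A cleaner route reuses your computation with real coefficients:
$\topeigval{\sum_n r_n\sigma^{(n)}}=\bigl(\sum_n r_n^2+2(\sum_{\{n,m\}}r_n^2r_m^2)^{1/2}\bigr)^{1/2}$, where the inner sum runs over commuting pairs. By Hellmann--Feynman, its gradient at $r=x$ is $x\sqrt{5+2\sqrt5}/5$, because every vertex of the commuting pentagon has degree two. This gives the equal shares at once.
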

\begin{proof}
The first equality follows from the following claim.\\
\noindent\textbf{Claim.}~
For every $x \in \{\pm\}^5$, $\lambda_{\max}(Y_x) = \sqrt{5 + 2\sqrt{5}}$.    
\\
\noindent To show the claim, let us expand
\[
  Y_x^2 = \sum_k \left(\sigma^{(k)}\right)^2 + \sum_{k<l}{x_k x_l}\{\sigma^{(k)}, \sigma^{(l)}\}
  = 5\,I + S_x,
\] where 
\[
  S_x := \sum_{k<l}{x_k x_l}\{\sigma^{(k)},\sigma^{(l)}\}.
\]
Direct computation of the ten anticommutators
of Eq.~\eqref{eq:52-obs} gives five vanishing pairs
and five non-vanishing pairs:
\[
  \{\sigma^{(1)},\sigma^{(3)}\} = -2\,ZY,\quad
  \{\sigma^{(1)},\sigma^{(4)}\} = -2\,IX,\quad
  \{\sigma^{(2)},\sigma^{(3)}\} = -2\,XY,
\]
\[
  \{\sigma^{(2)},\sigma^{(5)}\} = -2\,IZ,\quad
  \{\sigma^{(4)},\sigma^{(5)}\} = 2\,YY.
\]
The five non-vanishing anticommutators are, up to a common factor
of $2$, the Pauli strings $\{ZY, IX, XY, IZ, YY\}$.  These are
mutually anticommuting (a direct check on each of the ten pairs)
and each squares to $I$.

Hence $S_x = \sum_{i=1}^{5} \varepsilon_i^{(x)} M_i$ with $\varepsilon_i^{(x)}\in\{\pm\}$ and
$\{M_i\}_{i=1}^5 = \{ 2\,ZY, 2\,IX, 2\,XY, 2\,IZ, 2\,YY\}$
satisfying $\{M_i, M_j\} = 0$ for $i \ne j$ and $M_i^2 = 4I$, so
\[
  S_x^2 = \sum_i M_i^2 = 5 \cdot 4  I = 20\,I,
\]
independently of $x$.  Since $S_x$ is Hermitian, traceless, and
satisfies $S_x^2 = 20\,I$ in dimension $4$, its spectrum is
$\{+2\sqrt{5}, +2\sqrt{5}, -2\sqrt{5}, -2\sqrt{5}\}$.
Therefore $Y_x^2 = 5\,I + S_x$ has spectrum
$\{5+2\sqrt{5}, 5+2\sqrt{5}, 5-2\sqrt{5}, 5-2\sqrt{5}\}$ and
$\lambda_{\max}(Y_x) = \sqrt{5+2\sqrt{5}}$.
Eq.~\eqref{eq:avgprob} implies that $p_{\rm avg}=\frac{1}{2}\left(1+\frac{\sqrt{5+2\sqrt{5}}}{5}\right)$.

The bound $p_{\rm wc}\geq\frac{1}{2}\left(1+\frac{\sqrt{5+2\sqrt{5}}}{5}\right)$ can be verified by a straightforward calculation, taking the encoding states to be eigenvectors of $Y_x$ corresponding to its largest eigenvalues.
\end{proof}

Using a similar technique, we construct additional QRACs, including an improved \((5,2)\)-QRAC, as shown in Appendix~\ref{appendix:QRAC_constructions}.

Before proceeding to the next section, we mention that the encoding states $\{\ket{\psi_x}\}_{x \in \{\pm\}^N}$ and decoding observables $\{\sigma^{(n)}\}_{n=1}^N$ given in Proposition~\ref{prop:52} satisfy the following three conditions with a non-negative parameter $\mu$: \textbf{(TF)}, \textbf{(UD)}, and \textbf{(EC)}.
\begin{itemize}
\item[\textbf{(TF)}] Tight Frame:
  $\displaystyle\frac{2^M}{2^N}\sum_{x} \ket{\psi_x}\!\bra{\psi_x} = \idop$.
\item[\textbf{(UD)}] Uniform Decode:
  $\bra{\psi_x}  \sigma^{(n)} \ket{\psi_x} = \sqrt{\mu}\,{x_n}$
  for all $x, n$.
\item[\textbf{(EC)}] Eigenvalue Channel:
  $\Phi(\sigma^{(n)}) = \mu\, \sigma^{(n)}$, where
  $\Phi(A) := \frac{2^M}{2^N}\sum_{x}
  \bra{\psi_x}A\ket{\psi_x}\ket{\psi_x}\!\bra{\psi_x}$.
\end{itemize}

We can also derive the relationship between encoding states and the observables as follows.
\begin{lemma}
\label{lem:frame-povm}
Under \textup{(TF)}, \textup{(UD)}, \textup{(EC)} at value $\mu > 0$, the encoding states $\{\ket{\psi_x}\}_{x \in \{\pm\}^N}$ and decoding observables $\{\sigma^{(n)}\}_{n=1}^N$ satisfy
\[
  \sigma^{(n)} = \frac{1}{\sqrt{\mu}}\frac{2^M}{2^N} \sum_{x} x_n\ketbra{\psi_x}. 
\]
Moreover, they yield $(N,M,p)$-QRACs with $p= \frac{1}{2}\left(1+\sqrt{\mu}\right)$.

\end{lemma}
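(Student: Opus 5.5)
The identity for \(\sigma^{(n)}\) should come from (EC) after rewriting \(\Phi(\sigma^{(n)})\) with (UD). For each \(x\), (UD) gives \(\bra{\psi_x}\sigma^{(n)}\ket{\psi_x}=\sqrt{\mu}\,x_n\). Substituting this into the definition of \(\Phi\), I expect
\[
\Phi(\sigma^{(n)})=\frac{2^M}{2^N}\sum_x \sqrt{\mu}\,x_n\ketbra{\psi_x}.
\]
By (EC), the left-hand side equals \(\mu\,\sigma^{(n)}\). Since \(\mu>0\), dividing by \(\mu\) gives
\[
\sigma^{(n)}=\frac{1}{\sqrt{\mu}}\frac{2^M}{2^N}\sum_x x_n\ketbra{\psi_x},
\]
which is the first claim. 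Condition (TF) is not needed for this step. It will only serve to show that the decoding POVMs \(M_\pm^{(n)}=\frac12(\idop\pm\sigma^{(n)})\) can be written as frame sums,
\[
M_\pm^{(n)}=\frac{2^M}{2^N}\sum_x \frac{1\pm x_n/\sqrt{\mu}}{2}\ketbra{\psi_x}.
\]
This is a consistency remark rather than part of the proof.

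For the success probability, I take the encoding \(x\mapsto\ketbra{\psi_x}\). Then (UD) gives, for every \(x\) and \(n\),
\[
\tr{\ketbra{\psi_x}M_{x_n}^{(n)}}=\frac12\left(1+x_n\bra{\psi_x}\sigma^{(n)}\ket{\psi_x}\right)=\frac12\left(1+\sqrt{\mu}\,x_n^2\right)=\frac12(1+\sqrt{\mu}).
\]
This is uniform in \(x\) and \(n\). Hence both the worst-case and the average-case success probabilities of this explicit code equal \(\frac12(1+\sqrt{\mu})\).

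The main obstacle is showing that this really is an \((N,M,p)\)-QRAC, i.e. that \(\{M_\pm^{(n)}\}\) is a valid POVM, which requires \(-\idop\le\sigma^{(n)}\le\idop\). The lemma's hypotheses do not obviously force this. In the intended setting the \(\sigma^{(n)}\) are given as decoding observables, for example MUPVM involutions as in Proposition~\ref{prop:52}. I would state explicitly that \(\sigma^{(n)}\) is assumed to be a decoding observable, so that this bound holds by assumption. As a sanity check, (UD) together with \(\|\sigma^{(n)}\|_\infty\le1\) gives \(\sqrt{\mu}\le1\), so \(p\le1\) is consistent. If one wants \(p\) to be the optimal value, one can add a bound from Eq.~\eqref{eq:avgprob}, but the lemma only requires achievability, and the explicit encoding above suffices.
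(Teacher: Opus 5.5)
Your proposal is correct and follows the paper's proof essentially step for step: substitute (UD) into $\Phi(\sigma^{(n)})$, equate the result with $\mu\sigma^{(n)}$ via (EC) and divide by $\mu$, then use (UD) again to get the uniform value $\frac{1}{2}(1+\sqrt{\mu})$ for every $x,n$, which the paper reports as the worst-case probability. Your caveat that $-\idop\le\sigma^{(n)}\le\idop$ must be part of the hypothesis is accurate; the paper leaves it implicit by calling the $\sigma^{(n)}$ decoding observables, and, like you, its proof does not use (TF).
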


\begin{proof}
By~(EC), $\Phi(\sigma^{(n)}) = \mu\, \sigma^{(n)}$; expanding $\Phi$ with~(UD),
\[
  \Phi(\sigma^{(n)})
    = \sqrt{\mu}\cdot\tfrac{2^M}{2^N}\sum_x {x_n}\,
  \ket{\psi_x}\!\bra{\psi_x},
\]
so $\frac{2^M}{2^N}\sum_x {x_n}\ket{\psi_x}\!\bra{\psi_x}
= \Phi(\sigma^{(n)})/\sqrt{\mu} = \sqrt{\mu}\,\sigma^{(n)}$. 

(\textup{UD}) implies that the worst-case decoding probability is given by
\begin{eqnarray}
    p=\min_{n,x}\tr{\ket{\psi_x}\!\bra{\psi_x}\frac{\idop+x_n\sigma^{(n)}}{2}}=\frac{1}{2}\left(1+\sqrt{\mu}\right).
\end{eqnarray}
\end{proof}

The family of QRACs in Ref.~\cite{suzuki2026analyticalconstructionnn1} also provides encoding states and decoding observables satisfying (TF), (UD) and (EC) at $\mu = M/(M+1)$. 
Moreover, we can show 
\begin{equation}
    \min_{x\in \{\pm\}^N}\topeigval{\sum_{n=1}^Nx_n\sigma^{(n)}}
    \leq \sqrt{\mu^*}:=\max_{x\in \{\pm\}^N}\topeigval{\sum_{n=1}^Nx_n\sigma^{(n)}}=\sqrt{\frac{M}{M+1}}.
\end{equation}
Thus, Eq.~\eqref{eq:worstprob} implies that the decoding probability of the \((M+1,M)\)-QRAC constructed in Ref.~\cite{suzuki2026analyticalconstructionnn1} is optimal for these fixed decoding observables, even when the encoding states are allowed to vary.


\subsection{New construction for $(M+2,M)$-QRAC}
While there exist \((M+1,M)\)-QRACs~\cite{suzuki2026analyticalconstructionnn1} attaining the conjectured bound in Eq.~\eqref{eq:num_conj}, the \((5,2)\)-QRACs discussed above suggest that analogous constructions are unlikely to hold for \((N,M)\)-QRACs with $N\geq M+3$.
However, thanks to MUPVMs, we can show the existence of $(M+2,M)$-QRACs achieving the conjectured bound, as below. 

\begin{theorem}
    For every $M \ge 1$, there exists an $(M+2, M, p)$-QRAC with 
    $$
    p=\frac{1}{2}\left(1 + \sqrt{\frac{M}{M+2}} \right).
    $$
\end{theorem}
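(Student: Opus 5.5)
The plan is to construct explicit decoding observables $\{\sigma^{(n)}\}_{n=1}^{M+2}$ on $M$ qubits that form an MUPVM and whose signed sums have constant top eigenvalue. By Eq.~\eqref{eq:worstprob_exact} and Eq.~\eqref{eq:avgprob}, it suffices to show the following. For every $x\in\{\pm\}^{M+2}$, $\lambda_{\max}(Y_x)=\sqrt{M(M+2)}$. For worst-case optimality we also need $\lambda_{\max}(\sum_n x_np(n)\sigma^{(n)})\ge \sqrt{M/(M+2)}$ for every distribution $p$. The cleanest route to both is to realise the conditions (TF), (UD), (EC) with $\mu=M/(M+2)$ and then invoke Lemma~\ref{lem:frame-povm}. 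That lemma directly gives worst-case probability $\frac12(1+\sqrt\mu)$ with encoding states $\ket{\psi_x}$ chosen as top eigenvectors of $Y_x$.

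For the construction, I would mimic the $(5,2)$ argument of Proposition~\ref{prop:52}. Expand
\[
Y_x^2=(M+2)\idop+S_x,\qquad S_x=\sum_{k<l}x_kx_l\{\sigma^{(k)},\sigma^{(l)}\},
\]
and choose Pauli-string observables (or sparse sums of Pauli strings, to keep them involutory and orthogonal) in the following pattern. The first $M+1$ observables are taken from Suzuki's $(M+1,M)$ family, which already satisfies (TF), (UD), (EC) at $\mu=M/(M+1)$. The extra observable $\sigma^{(M+2)}$ is a Pauli string orthogonal to these, chosen so that its anticommutators with the others are again Pauli strings. The goal is to arrange the nonvanishing anticommutators so that $S_x$ has a controlled spectrum independent of $x$. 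Ideally it is a combination of pairwise anticommuting Pauli strings, so that $S_x^2$ is a multiple of $\idop$. Alternatively $S_x$ could decompose into commuting blocks, each of this anticommuting form, so that the largest eigenvalue of $Y_x^2$ equals $M(M+2)$ uniformly in $x$. Concretely, I would first work out small cases by hand. For $M=1$ the $(3,1)$ Pauli QRAC $X,Y,Z$ gives $\lambda_{\max}=\sqrt3$. For $M=2$ the known optimal $(4,2)$ code should be recovered. For $M=3$ I would find a $(5,3)$ code. From these I would extract an inductive pattern, for instance tensoring on one qubit and adding one observable when passing from $M$ to $M+1$. The induction would then be checked through the anticommutator structure.

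After that, (TF) will follow from a symmetry argument: the set of $Y_x$ is closed under sign flips implemented by Pauli conjugations, so averaging the top eigenprojectors gives a multiple of $\idop$. (UD) follows from the Hellmann–Feynman / symmetry argument that $\bra{\psi_x}\sigma^{(n)}\ket{\psi_x}$ is independent of $n$ up to the sign $x_n$, and then summing over $n$ gives $\sqrt{M(M+2)}/(M+2)=\sqrt\mu$. (EC) follows by computing $\Phi(\sigma^{(n)})$ via this symmetry and the orthogonality relations Eq.~\eqref{eq:MUPVMu}.

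The main obstacle is the construction itself. We need an $x$-independent top eigenvalue exactly equal to $\sqrt{M(M+2)}$ for all $M$, which is a strong spectral-regularity requirement. Lower eigenvalues must not vary in a way that breaks (TF), and degeneracy of the top eigenspace may require choosing the encoding state carefully. I expect to need a careful choice of signs and of the $(M+2)$-th observable so that the anticommutator graph has the right structure. Verifying this for general $M$ will likely be done via an explicit block decomposition of $Y_x^2$.
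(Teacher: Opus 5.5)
Your reduction matches the paper's: realise (UD) at $\mu=M/(M+2)$ and apply Lemma~\ref{lem:frame-povm}. Only (UD) is actually used there to get the success probability. The gap is that the proposal never produces the object whose existence \emph{is} the theorem: for every $M$, a set of $M$-qubit observables with $-\idop\le\sigma^{(n)}\le\idop$, together with states satisfying (UD). ``Work out small cases and extract an inductive pattern'' is a search plan, not a proof.

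Moreover, the concrete design you suggest provably fails. At $M=2$, Suzuki's $(3,2)$ decoder attains the $l=1$ bound $\tfrac12(1+\sqrt{2/3})$, as does any $(3,2)$ decoder reaching that value. So by Theorem~\ref{thm:2qubit_QRAC} it satisfies $\sum_{m=1}^{3}\sigma^{(m)}\sigma^{(n)}\sigma^{(m)}=2\sigma^{(n)}$. A $(4,2)$ decoder reaching $\tfrac12(1+1/\sqrt2)$ must satisfy the same identity with the sum running to $m=4$, and the constant $2$ does not depend on $N$. If it contained Suzuki's three observables, this would force $\sigma^{(4)}\sigma^{(1)}\sigma^{(4)}=0$, which is impossible since $\sigma^{(4)}$ is unitary. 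So ``Suzuki's $M+1$ observables plus one extra'' already breaks at $M=2$.

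Your preferred spectral mechanism also cannot work for $M\ge3$ with Pauli-string observables. The nonzero anticommutators are $2\sigma^{(k)}\sigma^{(l)}$ over the set $E$ of commuting pairs. If these pairwise anticommute, then $\lambda_{\max}(Y_x)^2\le (M+2)+2\sqrt{|E|}$. Reaching $M(M+2)$ would then need $|E|\ge\bigl((M+2)(M-1)/2\bigr)^2$, which exceeds $\binom{M+2}{2}$ once $M\ge3$. The ``commuting blocks'' fallback is left unspecified.

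The symmetry steps are likewise unsupported. Averaging top eigenprojectors over sign-flipping Pauli conjugations only yields an operator in the commutant of a Pauli subgroup generated by at most $M+2$ elements. That commutant is nontrivial when $M+2<2M$, so (TF) does not follow. (UD) requires $\bra{\psi_x}\sigma^{(n)}\ket{\psi_x}=\sqrt{\mu}\,x_n$ with the same magnitude for every $n$. Knowing $\lambda_{\max}(Y_x)$ on the cube vertices says nothing about the Hellmann--Feynman gradient there. Your design also has a distinguished $(M+2)$-th observable, with no symmetry exchanging it with the others.

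The missing idea is the paper's compression trick: do not search on $M$ qubits at all.
\begin{itemize}
\item Take the trivial classical code on $M+2$ qubits, with states $\ket b$ and decoders $\Delta_i=Z_i$.
\item Compress it onto $\Range(P)$, where $P=\frac{\idop+A}{2}\cdot\frac{\idop+C}{2}$, $A=\sum_i a_i\gamma_i$ and $C=\sqrt{-1}\,\Gamma\sum_i c_i\gamma_i$.
\item Here the $\gamma_i$ are Jordan--Wigner Clifford generators, $\Gamma=Z^{\otimes(M+2)}$, and $a,c$ are the orthonormal cosine/sine vectors with angles $\pi(i-1)/(M+2)$.
\item $P$ is a rank-$2^M$ projector, since $A$ and $C$ are commuting involutions with $A$, $C$, $AC$ traceless.
\item Set $O_i=\lambda^{-1}U^\dagger P\Delta_iPU$ and $\ket{\psi_b}=2U^\dagger P\ket b$.
\end{itemize}
Three direct computations then handle all $M$ at once, with no spectral analysis of $Y_x$:
\begin{itemize}
\item $\bra bP\ket b=\tfrac14$;
\item $\bra bP\Delta_iP\ket b=\tfrac{M}{4(M+2)}(-1)^{b_i}$;
\item $P\Delta_iP\Delta_iP=\tfrac{M}{M+2}P$.
\end{itemize}
Together these give involutive, traceless, orthogonal $O_i$ and (UD) at $\mu=M/(M+2)$.
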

\begin{proof}
    We prove this theorem by constructing an MUPVM family from observables $\{O_i\}_{i=1}^{M+2}$ and matching encoding states $\{\ket{\psi_b} \}_{b \in \{0,1\}^{M+2}}$ on $M$-qubit Hilbert space satisfying (TF), (UD), and (EC) conditions
    at $\mu = M/(M+2)$. 

    Let $\mathcal{K}$ be the $(M+2)$-qubit Hilbert space with computational basis $\{\ket{b}:b\in \{0,1\}^{M+2}\}$. Let $\Gamma, \Delta_i, \gamma_i, \rho_i$ be, resp., the diagonal parity operator, the coordinate-sign operators, the left, and the right Clifford generators all in $\mathcal{K}$ defined as 
    \begin{eqnarray*}
        \Gamma \ket{b} := (-1)^{\sum_i b_i}\ket{b}, &\qquad& \Delta_i \ket{b} := (-1)^{b_i} \ket{b}, \\
        \gamma_i \ket{b} := (-1)^{\sum_{j=1}^{i-1} b_j} \ket{b \oplus e_i}, &\qquad&
        \rho_i \ket{b} := (-1)^{\sum_{j=i+1}^{M+2}b_j}\ket{b \oplus e_i}.
    \end{eqnarray*}
    We can observe that they are Hermitian involutions with $\{\gamma_i, \gamma_j\} = 2\delta_{i,j}\idop$, $\left[\gamma_i, \rho_j\right] = 0$, $\Gamma \gamma_i = - \gamma_i \Gamma$, and $\Delta_i = \Gamma \gamma_i \rho_i$. 
    By using Pauli matrices, we can represent them as follows:
    \begin{equation}
        \Gamma=Z^{\otimes(M+2)},\ \Delta_i=Z_i,\ \gamma_i=Z^{\otimes(i-1)}\otimes X\otimes\idop^{\otimes(M+2-i)},\ 
        \rho_i=\idop^{\otimes(i-1)}\otimes X\otimes Z^{\otimes(M+2-i)}.
    \end{equation}

    Next, let us define orthonormal vectors $a, c \in \mathbb{R}^{M+2}$ such that $a_i = \sqrt{2/(M+2)} \cos\theta_i$ and $c_i = \sqrt{2/(M+2)}\sin\theta_i$ for $\theta_i = \pi(i-1)/(M+2)$, and construct the following orthogonal, traceless, Hermitian involutions 
    \[
    A := \sum_i a_i \gamma_i, \qquad 
    B := \sum_i c_i \gamma_i, \qquad
    C := \sqrt{-1} \Gamma B. 
    \]
    Note that they satisfy the following commutation relations.
    \begin{equation}
        \nonumber
        \{A,B\}=\{B,C\}=[A,C]=0.
    \end{equation}

    Then, the projection $P$ defined as 
    \[
    P := \frac{\idop+A}{2} \cdot \frac{\idop+C}{2}
    \]
    is a rank $2^{M}$ projection onto the simultaneous $+1$-eigenspace of $A$ and $C$. 
    We can verify that $P$ is a rank $2^M$ projection by observing $P^2=P$ and $\tr{P}=2^M$.
    Let any isometry $U: \mathbb{C}^{2^M} \to \Range(P)$ with $U^{\dagger}U = \idop_{2^M}$ and $UU^{\dagger} = P$. For $\lambda = \sqrt{\mu}$, the $M$-qubit observables of the MUPVMs and matching encoding (pure) quantum states are, respectively,  
    \[
    O_i := \lambda^{-1}U^{\dagger}P\Delta_i P U, \qquad \ket{\psi_b} := 2U^{\dagger}P\ket{b}. 
    \]

    By the identity $P = (\idop + A + C +AC)/4$ and $\sum_{j \neq i} \sin^2\left(\pi (i-j)/(M+2) \right) = (M+2)/2$, we can derive 
    \[
    \bra{b} P \ket{b} = \frac{1}{4}, \qquad \bra{b} P \Delta_i P \ket{b} = \frac{M}{4(M+2)}(-1)^{b_i}.
    \]
    Moreover, by decomposing $\gamma_i = a_i A + c_i B + \{\text{orthogonal terms}\}$ and the identity $P A P = P C P = P$, we obtain 
    \[
    P \Delta_i P \Delta_i P = \frac{M}{M+2} P.
    \]

    We are ready to verify the MUPVM property: each $O_i$ is clearly Hermitian ---because $\Delta_i$ and $P$ are---, an involution because 
    \[
    O_i^2 = \lambda^{-2} U^{\dagger}\left(P \Delta_i P \Delta_i P \right) U = \lambda^{-2} \cdot \frac{M}{M+2} U^{\dagger}P U = \idop_{2^M},
    \]
    and traceless because 
    \[
    \tr{O_i} = \lambda^{-1} \sum_b \bra{b}P\Delta_i P \ket{b} = \lambda^{-1}\frac{M}{4(M+2)}\sum_b (-1)^{b_i} = 0. 
    \]
    The identity $\tr{O_i O_j} = 0$ for distinct $i$ and $j$ can be verified as follows:
    \[
        \tr{O_iO_j}=\lambda^{-2}\tr{P\Delta_iP\Delta_j}=\frac{\lambda^{-2}M}{4(M+2)}\sum_b(-1)^{b_i+b_j}=0.
    \]
    Thus, $\{(\idop\pm O_i)/2 \}_{i=1}^{M+2}$ is a valid MUPVM family on $\mathbb{C}^{2^M}$. 

    The (TF) condition is satisfied because 
    \[
    \frac{2^M}{2^{M+2}} \sum_b \ket{\psi_b}\bra{\psi_b} = U^\dagger P \left(\sum_b \ket{b}\bra{b} \right) P U = U^\dagger P^2 U = U^\dagger P U = \idop_{2^M}. 
    \]

    The (UD) condition is from
    \[
    \bra{\psi_b} O_i \ket{\psi_b} = 4 \lambda^{-1} \bra{b} P \Delta_i P \ket{b} = \lambda^{-1} \frac{M}{M+2}(-1)^{b_i} = \sqrt{\mu} (-1)^{b_i}, 
    \]
    where we utilize $UU^\dagger = P$ and $P^3 = P$. 

    The (EC) condition is satisfied from
    \[
    \Phi(O_i) = \frac{2^M}{2^{M+2}} \sum_b \sqrt{\mu}(-1)^{b_i} \ket{\psi_b}\bra{\psi_b} = \sqrt{\mu} U^\dagger P \Delta_i P U = \mu O_i. 
    \]

    Thus, Lemma~\ref{lem:frame-povm} implies the existence of $(M+2, M, 1/2(1 + \sqrt{M/(M+2)}))$-QRACs. 
    
\end{proof}

\section{Discussion and open problems}

Our measurement-first approach shifts the study of QRACs from an information-theoretic analysis of encoding states to a spectral analysis of decoding measurements. 
In this formulation, the relevant quantities are the largest eigenvalues of the signed sums
\[
    Y_x=\sum_{n=1}^N x_n\sigma^{(n)} .
\]
We bounded these eigenvalues by modifying a Chernoff-type argument in two ways.

The first modification is to move from the worst-case setting to the average-case setting. 
The standard Chernoff-type argument directly controls tail probabilities and is therefore naturally suited to worst-case estimates. 
We observed that, when the moment-generating function \(f\) is convex, the same framework can also be used to bound the average-case decoding probability. 
The second modification is to enlarge the admissible class of moment-generating functions \(f\). 
By analyzing the Taylor coefficients of \(f\), we identified conditions under which the Chernoff-type argument remains valid. 
The resulting optimized functions have a simple form, making the new finite-size bounds explicit and numerically computable.

There remains room for further improvement. 
In the present argument, the matrix terms appearing in the Taylor expansion are bounded uniformly. 
A sharper analysis may be obtained by exploiting more of their noncommuting structure. 
For example, algebraic relations among the decoding observables could lead to tighter estimates on the norms or traces of the matrix terms. 
It would also be interesting to further investigate the connection between our refined Chernoff-type bounds and recent progress in matrix discrepancy theory. 
Such a connection may provide a route toward stronger finite-size QRAC bounds beyond those derived here.

The equality conditions of our QRAC bounds led naturally to the notion of mutually unbiased projector-valued measurements (MUPVMs). 
Thus, our proof provides not only an upper bound, but also a diagnostic for candidate optimal QRACs. 
At the same time, MUPVMs do not capture all constraints required for optimality. 
In the two-qubit case, attaining the refined bound imposes additional higher-order noncommuting relations among the decoding observables. 
These constraints go beyond pairwise mutual unbiasedness and involve the spectra of signed sums for all choices of signs. 
Understanding these higher-order relations may reveal deeper principles for designing optimal QRACs.

Indeed, several additional structures were exploited in our constructions.
First, in Propositions~\ref{prop:52} and~\ref{prop:53}, we used Pauli observables, or sparse sums of Pauli observables: anticommuting summands within each \(\sigma^{(n)}\) guarantee projectivity, while disjoint Pauli supports between different \(\sigma^{(n)}\)'s guarantee Hilbert--Schmidt orthogonality. 
This idea can also be used to obtain new families of QRACs as summarized in Table~\ref{tab:qrac-summary}. 
For example, we obtained \((7,2,0.7379)\)-, \((8,2,1/\sqrt{2})\)-, and \((9,2,0.6306)\)-QRACs whose success probabilities improve over those of Ref.~\cite{imamichi2018constructions}.
For the \((11,2)\)- and \((12,2)\)-QRACs, we observed that non-mutually unbiased PVMs outperform MUPVMs. 
This suggests that, when the conjectured bound is not attained, optimal QRACs might not be fully characterized by MUPVMs alone.
The details of these derivations will be presented in future work. 
Similar techniques can also be used to construct new QRACs with a small number of qubits. 
For instance, we found an \((8,5,p)\)-QRAC with 
\[ 
 \frac12\left(1+\sqrt{\frac35}\right) < p < \frac12\left(1 + \sqrt{\frac58} \right),
\]
which is strictly better than concatenating a \((5,3)\)-QRAC and a \((3,2)\)-QRAC but worse than the conjectured bound (Eq.~\eqref{eq:num_conj}).

Second, we used the TF, UD, and EC conditions, which are satisfied by Suzuki's QRAC construction~\cite{suzuki2026analyticalconstructionnn1}, to construct the \((M+2,M)\)-QRAC family. 
However, the construction of \((N,M)\)-QRACs attaining the conjectured bound (Eq.~\eqref{eq:num_conj}) for \(N\ge M+3\) remains unresolved. 
Our findings suggest that such a QRAC may not exist in this regime, as illustrated by the \((5,2)\)-QRAC case.

From a practical perspective, improved QRAC constructions may broaden the range of experimentally feasible quantum certification protocols and combinatorial optimization methods, as shown in Ref.~\cite{teramoto2023quantum}. 
Many such applications rely on a threshold quantum advantage, and better QRACs can reduce the resources needed to exceed these thresholds. 
From a theoretical perspective, the framework developed here---optimizing moment-generating functions for spectral bounds---may also be useful beyond QRACs. 
Examples include quantum state discrimination and random quantum states~\cite{M07,AHH12}, 
strong-converse and concentration arguments in quantum information theory~\cite{AW02}, 
dimension witnesses and prepare-and-measure certification~\cite{WCD08,THMB15,FK19,TSVBB20}, 
matrix discrepancy and vector balancing~\cite{DJR22,HRS22,BJM23}, 
Hamiltonian complexity and local Hamiltonian spectral estimates~\cite{KKR06,BH16}, 
many-body quantum chaos~\cite{MEHTA1960420,KLP18}, 
scrambling and black-hole information dynamics~\cite{SYK,PatrickHayden_2007}, 
and randomized measurement protocols~\cite{HKP20,EFHKPVZ23,HPS25}.

\section*{Acknowledgment}
SA is partially supported by JST Moonshot R\&D MILLENNIA Program (Grant No.JPMJMS2061), JPMXS0120319794, and CREST (Japan Science and Technology Agency) Grant No.JPMJCR2113.
ST is supported by JSPS KAKENHI grant JP22K11909. RR and SA acknowledge Takashi Imamichi of IBM Research Tokyo for his help in analyzing the $(5,2)$-QRAC. RR thanks Anupam Prakash for insightful discussions. 

\section*{AI Disclosure}
We used ChatGPT 5.5 Thinking Extended and Gemini 3.1 Pro to assist with the exploration and optimization of candidate QRAC constructions as well as the proof of nonexistence of an $(N,2)$-QRAC attaining the conjectured bound for $N>6$.
The final mathematical claims, proofs, numerical computations, and certificate tables were checked by the authors.
Some constructions are hand-verifiable, while others are certified by explicit computer-checkable tables.

\bibliographystyle{plain}
\bibliography{reference}

\appendix

\section{Summary of new QRAC constructions}
\label{appendix:QRAC_constructions}
\begin{table}[t]
\centering
\small
\begin{tabular}{r r c l}
\hline
QRAC & Success probability \(p\) & Decoder class & Verification \\
\hline
\((5,2)\)   & \(0.81182\)       & MUPVM & symbolic \\
\((7,2)\)   & \(0.73798\)       & MUPVM & symbolic \\
\((8,2)\)   & \(0.70711\)       & MUPVM & symbolic \\
\((9,2)\)   & \(0.63060\)       & MUPVM & symbolic \\
\((10,2)\)  & \(0.57265\)       & MUPVM & symbolic \\
\((11,2)\)  & \(\ge 0.53611\)   & PVM; not mutually unbiased & computer-certified \\
\((12,2)\)  & \(\ge 0.50707\)   & PVM; not mutually unbiased & computer-certified \\
\((7,3)\)   & \(\ge 0.81867\)   & MUPVM & computer-certified \\
\((8,3)\)   & \(\ge 0.79725\)   & MUPVM & computer-certified \\
\((10,3)\)  & \(\ge 0.73126\)   & MUPVM & computer-certified \\
\((11,3)\)  & \(0.71117\)       & MUPVM & symbolic \\
\((12,3)\)  & \(0.64434\)       & MUPVM & symbolic \\
\((7,4)\)   & \(\ge 0.87692\)   & MUPVM & computer-certified \\
\((9,4)\)   & \(\ge 0.82162\)   & MUPVM & computer-certified \\
\((8,5)\)   & \(\ge 0.88921\)   & MUPVM & computer-certified \\
\((9,5)\)   & \(\ge 0.87165\)   & MUPVM & computer-certified \\
\hline
\end{tabular}
\caption{Summary of the lower bounds of the success probabilities $p$ of the projective $(N,M)$-QRACs.}
\label{tab:qrac-summary}
\end{table}

We summarize in Table~\ref{tab:qrac-summary} several projective QRAC lower bounds obtained in this work.
Only the resulting
success probabilities and decoder classes are shown here; explicit
constructions and verification certificates will appear in a forthcoming
paper.  Here ``MUPVM'' means that the decoding observables can be
chosen as traceless Hermitian involutions satisfying
\(\operatorname{Tr}(\sigma^{(m)}\sigma^{(n)})=2^M\delta_{mn}\).
We provide a proof for an improved \((5,2)\)-QRAC whose success probability is better than that of Proposition~\ref{prop:52}.
\begin{proposition}
\label{prop:53}
    There exists a $(5,2,p)$-QRAC whose POVMs are MUPVMs and whose bias of success probabilities is 
\[
  \beta_{5,2}=2p-1
  =\sqrt{\frac{(\kappa+2)(\kappa^2+1)}{\kappa^3+4\kappa^2+3\kappa+6}} \approx 0.6236,
\]
where $\kappa\in(0,1)$ is the unique root of $\kappa^4+2\kappa^3-1=0$. 
\end{proposition}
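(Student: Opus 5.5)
We propose to follow the template of Proposition~\ref{prop:52}: exhibit five two-qubit observables built from sparse sums of Pauli strings that are traceless, Hermitian, involutory and pairwise Hilbert--Schmidt orthogonal (hence MUPVMs), and then evaluate $\lambda_{\max}(Y_x)$ for every sign pattern $x\in\{\pm\}^5$. The design will be a one-parameter deformation of Eq.~\eqref{eq:52-obs}. We keep three observables as single Pauli strings and replace the remaining ones by combinations $\sigma=\cos\phi\,P+\sin\phi\,Q$ with anticommuting Pauli strings $P,Q$. Anticommutation of $P$ and $Q$ guarantees $\sigma^2=\idop$. We also require every Pauli string used in one $\sigma^{(n)}$ to be disjoint from those used in every other $\sigma^{(m)}$, which makes $\tr{\sigma^{(n)}\sigma^{(m)}}=0$ automatic. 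With these constraints, the MUPVM property is immediate by inspection, and all the work goes into the spectral part.

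For the spectrum we again expand $Y_x^2=5\idop+S_x$ with $S_x=\sum_{k<l}x_kx_l\{\sigma^{(k)},\sigma^{(l)}\}$. Each anticommutator is now a linear combination of Pauli strings with coefficients that are trigonometric in $\phi$. Since the average-case probability of Eq.~\eqref{eq:avgprob} is what we need, we group the $32$ sign patterns into orbits under the symmetry of the construction: $x\mapsto -x$ always, together with the sign/permutation symmetries preserved by the chosen observables. We expect only a few orbits, perhaps two. For each orbit representative we compute the $4\times4$ matrix $S_x$ explicitly in the Pauli basis and find its top eigenvalue. If the non-vanishing anticommutator strings split into mutually anticommuting blocks, as they do in the proof of Proposition~\ref{prop:52}, then $S_x^2$ is a scalar plus a small correction. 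In that case $\lambda_{\max}(S_x)$ has a closed form such as $\sqrt{\alpha(\phi)+\beta(\phi)}$.

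The average $\frac{1}{5}\expectation{\lambda_{\max}(Y_x)}$ is then an explicit function of $\phi$. Its value at $\phi=0$ is the bias $\sqrt{5+2\sqrt5}/5\approx0.6155$ of Proposition~\ref{prop:52}, and we maximize it over $\phi$. Because the claimed bias is a uniform square root of a rational function of $\kappa$, we actually aim for a construction in which $\lambda_{\max}(Y_x)$ is independent of $x$. Under that design goal $p_{\rm wc}=p_{\rm avg}$, and we only need $\lambda_{\max}(Y_x)^2/25=\beta_{5,2}^2$. We will parametrize the mixing by $\kappa=\tan(\phi/2)$ or a similar rational substitution. The stationarity condition should then reduce to the quartic $\kappa^4+2\kappa^3-1=0$, and substituting the root back should give $\beta_{5,2}^2=(\kappa+2)(\kappa^2+1)/(\kappa^3+4\kappa^2+3\kappa+6)$. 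After that, the existence of the optimal encoding follows from Eq.~\eqref{eq:avgprob} by taking $\rho_x$ to be the top eigenvectors.

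The main obstacle is the first step: finding an ansatz in which the spectra of all $Y_x$ stay simultaneously tractable, and ideally equal, along the deformation. Generic mixings destroy the anticommuting structure of the $S_x$ terms and leave quartic characteristic polynomials with no clean closed form. We would first search numerically over small Pauli-sparse ansätze, and then symbolically certify the winner by factoring $\det(\lambda\idop-Y_x)$ into quadratics in $\lambda^2$ for each orbit. A secondary risk is that the optimization is not $x$-uniform. In that case we would verify the quartic directly as the critical-point equation of the orbit-weighted average, and state the result for $p_{\rm avg}$.
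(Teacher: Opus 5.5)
\textbf{There is a genuine gap.} It lies in your claim that if $\lambda_{\max}(Y_x)$ is independent of $x$, then $p_{\rm wc}=p_{\rm avg}$, so that it suffices to maximize $\frac15\lambda_{\max}(Y_x)$ over the deformation parameter. That implication is false.

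By Eq.~\eqref{eq:worstprob_exact}, $p_{\rm wc}=\frac12\bigl(1+\min_x\min_p\lambda_{\max}(\sum_n x_np(n)\sigma^{(n)})\bigr)$ is a max--min quantity. The uniform weighting $p(n)=1/N$ only gives an upper bound. Equality needs the top eigenvector of $Y_x$ to produce the \emph{same} bias $\bra{\psi_x}x_n\sigma^{(n)}\ket{\psi_x}$ for every $n$, and $x$-uniformity of the spectrum does not give this. In Proposition~\ref{prop:52} this property is an extra fact that is checked separately.

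A concrete counterexample is the family that actually realizes the claimed value:
$\sigma^{(1)}=\cos\theta\,XI-\sin\theta\,YY$, $\sigma^{(2)}=\sin\theta\,XY-\cos\theta\,YI$, $\sigma^{(3)}=ZI$, $\sigma^{(4)}=IZ$, $\sigma^{(5)}=IX$. These are MUPVMs by exactly your anticommuting-summand and disjoint-support mechanism.
\begin{itemize}
\item Here $S_x/2=x_1x_2\sin2\theta\,(I\otimes Y)+v\otimes u$, with $v=\cos\theta(x_1X-x_2Y)+x_3Z$ and $u=x_4Z+x_5X$. The two terms anticommute, so $S_x^2=4(\sin^22\theta+4\cos^2\theta+2)\idop$. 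Hence $\lambda_{\max}(Y_x)$ is $x$-independent for every $\theta$.
\item Your average-case objective $\sqrt{5+2\sqrt{6-4\sin^4\theta}}/5$ is maximized only at $\theta=0$. That point is the concatenation of a $(3,1)$- and a $(2,1)$-QRAC, with $\lambda_{\max}(Y_x)=\sqrt2+\sqrt3$ and average bias $\approx0.629$, but worst-case bias only $1/\sqrt3\approx0.577$.
\item Even at the optimal parameter $\cos2\theta=\kappa^2$, the average bias is $\approx0.626$, strictly above $\beta_{5,2}\approx0.6236$.
\end{itemize}
So the quartic $\kappa^4+2\kappa^3-1=0$ does not arise as a stationarity condition of $\expectation{\lambda_{\max}(Y_x)}$. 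Your fallback of stating the result for $p_{\rm avg}$ would prove a different statement with a different value, since an $(N,M,p)$-QRAC is defined by its worst-case guarantee.

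\textbf{The missing idea} is to certify the worst case through a \emph{weighted} signed sum. The paper takes
\[
H_x=a\,x_1\sigma^{(1)}+a\,x_2\sigma^{(2)}+b\,x_3\sigma^{(3)}+c\,x_4\sigma^{(4)}+c\,x_5\sigma^{(5)},
\]
with $D=\kappa^3+\kappa^2+\kappa+3$, $a=1/D$, $b=1-2(\kappa+1)/D$ and $c=\kappa/D$. Note that $2a+b+2c=1$, so these weights form a probability vector. It encodes $x$ into the top eigenvector $\ket{\psi_x}$ of $H_x$ and verifies $\bra{\psi_x}\sigma^{(i)}\ket{\psi_x}=\beta x_i$ for all $i$.
\begin{itemize}
\item Then $\beta=\lambda_{\max}(H_x)$. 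The same anticommutation trick gives $\lambda_{\max}(H_x)^2=2a^2+b^2+2c^2+2\sqrt{a^4\sin^22\theta+4a^2c^2\cos^2\theta+2b^2c^2}$, which evaluates to the stated rational function of $\kappa$.
\item The equal-bias property also shows these weights minimize over $p$ in Eq.~\eqref{eq:worstprob_exact}. For any $p'$, $\lambda_{\max}(\sum_n p'(n)x_n\sigma^{(n)})\ge\bra{\psi_x}\sum_n p'(n)x_n\sigma^{(n)}\ket{\psi_x}=\beta$, so $p_{\rm wc}=\frac12(1+\beta)$ exactly for these decoders.
\end{itemize}
To repair your plan, keep your MUPVM ansatz but change the objective. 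For each $\theta$, choose the weights that equalize the per-bit biases of the top eigenvector of $\sum_n p(n)x_n\sigma^{(n)}$, then maximize that common bias over $\theta$; this max--min, not the uniform average, is what produces $\kappa$. Separately, your proposal never actually exhibits the observables, which is the substance of an existence proof.
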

\begin{proof}
    The decoding observables $\sigma^{(i)}$ are given by
    \begin{align}
  \sigma^{(1)}&=\cos\theta\, X \Id-\sin\theta\, Y Y,\label{eq:A1}\\
  \sigma^{(2)}&=\sin\theta\, X Y-\cos\theta\, Y \Id,\label{eq:A2}\\
  \sigma^{(3)}&=Z \Id,\label{eq:A3}\\
  \sigma^{(4)}&=\Id Z,\label{eq:A4}\\
  \sigma^{(5)}&=\Id X,\label{eq:A5}
\end{align}
where $\cos(2\theta) = \kappa^2$ for $\theta \in (0,\pi/4)$.
The quantum state for encoding $x \in \{\pm 1\}^5$ is defined as 
\begin{align}
    \ket{\psi_x} := \arg\max_{\ket{\psi}} \bra{\psi}~H_x~\ket{\psi},  
\end{align}
where 
$$
H_x := a x_1 \sigma^{(1)} + a x_2 \sigma^{(2)} +  b x_3 \sigma^{(3)} + c x_4 \sigma^{(4)} + c x_5 \sigma^{(5)}, 
$$
such that for $D := \kappa^3 + \kappa^2 + \kappa + 3$, 
$$
a := 1/D,~~b := 1 - 2(\kappa+1)/D,~~ c := \kappa/D.
$$
It can be confirmed that 
$$
\bra{\psi_x}~\sigma^{(i)}~\ket{\psi_x} = \beta x_i,
$$
holds for any $x \in \{\pm\}^5$ and $i = 1,\ldots,5$; hence $p_{\rm wc}=\frac{1}{2}(1+\beta)$. 
\end{proof}

For reference, some of the entries in Table~\ref{tab:qrac-summary}
come from the following closed-form biases $\beta_{N,M}$ for $(N,M)$-QRACs with success probabilities $p \ge \tfrac{1}{2}(1+\beta_{N,M})$:
\[
\begin{gathered}
\beta_{7,2}=\sqrt{(3-\sqrt2)/7},\qquad
\beta_{8,2}=\sqrt2-1,\qquad
\beta_{9,2}=\frac{2\sqrt2-1}{7},\\
\beta_{10,2}=\frac{1}{\sqrt{25+10\sqrt5}},\qquad
\beta_{11,3}=\frac{3\sqrt3+2\sqrt2}{19},\qquad
\beta_{12,3}=\frac1{2\sqrt3}.
\end{gathered}
\]
Further exact formulae and certificate details will be given in the forthcoming paper.

\end{document}